\newfont{\mycrnotice}{ptmr8t at 7pt}
\newfont{\myconfname}{ptmri8t at 7pt}
\let\mod\bmod
\newtheorem{theorem}{Theorem}[section]
\newtheorem{notation}[theorem]{Convention}
\newtheorem{prop}[theorem]{Proposition}
\newtheorem{lemma}[theorem]{Lemma}
\newtheorem{remark}[theorem]{Remark}
\newtheorem{definition}[theorem]{Definition}
\newtheorem{example}[theorem]{Example}
\newcommand{\bB}{ {\mathbb B}}
\newcommand{\bE}{ {\mathbb E}}
\newcommand{\bF}{ {\mathbb F}}
\newcommand{\bD}{ {\mathbb D}}
\newcommand{\bP}{ {\mathbb P}}
\newcommand{\bN}{ {\mathbb N}}
\newcommand{\bZ}{ {\mathbb Z}}
\newcommand{\bU}{ {\mathbb U}}
\newcommand{\bV}{ {\mathbb V}}
\newcommand{\bW}{ {\mathbb W}}
\newcommand{\cP}{ {\mathcal P}}
\newcommand{\cC}{ {\mathcal C}}
\newcommand{\spanning}{\operatorname{span}}
\newcommand{\lsx}{\bF\langle S_x \rangle}
\let\set\mathbbm
\def\lc{\operatorname{lc}}
\def\im{\operatorname{im}}
\begin{document}
\title{A Modified Abramov-Petkov\v{s}ek Reduction and \\ Creative Telescoping for
Hypergeometric Terms\titlenote{S.\ Chen was supported in part by the NSFC grant 11371143 and by
President Fund of Academy of Mathematics and Systems Science, CAS (2014-cjrwlzx-chshsh), H.\ Huang
by the Austrian Science Fund (FWF) grant W1214-13, M.\ Kauers by the Austrian Science Fund (FWF)
grants Y464-N18 and F50-04, H.\ Huang and Z.\ Li by two NSFC grants (91118001, 60821002/F02) and a
973 project (2011CB302401).}}

\numberofauthors{1} 
\author{\medskip
        Shaoshi Chen$^1$, \, Hui Huang$^{1,2}$, \, Manuel Kauers$^2$,  \, Ziming Li$^1$ \\
        \smallskip
        \affaddr{$^1$KLMM,\, AMSS, \,Chinese Academy of Sciences, Beijing 100190, (China)}\\
        \smallskip
        \affaddr{$^2$RISC, Johannes Kepler University, Linz A-4040, (Austria)}\\
        \smallskip
        \email{{schen, huanghui}@amss.ac.cn}\\
        \email{mkauers@risc.jku.at, \, zmli@mmrc.iss.ac.cn}
    }

\maketitle
\begin{abstract}
The Abramov-Petkov\v{s}ek reduction computes an additive decomposition of a hypergeometric term,
which extends the functionality of the Gosper algorithm for indefinite hypergeometric summation. We
modify the Abramov-Petkov\v{s}ek reduction so as to decompose a hypergeometric term as the sum of a
summable term and a non-summable one. The outputs of the Abramov-Petkov\v{s}ek reduction and our
modified version share the same required properties. The modified reduction does not solve  any
auxiliary linear difference equation explicitly. It is also more efficient than the original
reduction according to computational experiments. Based on this reduction, we design a new
algorithm to compute minimal telescopers for bivariate hypergeometric terms. The new algorithm can
avoid the costly computation of certificates.
\end{abstract}

\category{I.1.2}{Computing Methodologies}{Symbolic and Algebraic Manipulation}[Algebraic
Algorithms]

\terms{Algorithms, Theory}

\keywords{Abramov-Petkov\v{s}ek reduction, Hypergeometric term, Telescoper, Summability}

\section{Introduction}\label{SECT:intro}

    Creative telescoping is a staple of symbolic summation. Its main use
    is to construct recurrence equations that have a prescribed definite sum among
    their solutions. By using other algorithms applicable to recurrence equations, it
    is then possible to derive interesting facts about the original definite sum, such
    as closed forms or asymptotic expansions.

    The computational problem of creative telescoping is to construct, for a given
    term $f(x,y)$, polynomials $\ell_0,\dots,\ell_r$ in~$x$ only, not all zero, and
    another term $g(x,y)$ s.t.\
    \[
    \ell_0(x)f(x,y)+\cdots+\ell_r(x)f(x+r,y)=g(x,y+1)-g(x,y).
    \]
        The number $r$ may or may not be part of the input.

    We can distinguish four generations of creative telescoping algorithms. The
    first generation was based on elimination
    techniques~\cite{Fasenmyer1947,Zeilberger1990,PWZBook1996,chyzak98}. The second generation
    started with what is now known as Zeilberger's
    algorithm~\cite{zeilberger90a,Almkvist1990,zeilberger91,PWZBook1996}. The
    algorithms of this generation use the idea of augmenting an algorithm for
    indefinite summation (or integration) by additional parameters
    $\ell_0,\dots,\ell_r$ that are carried along during the calculation and are
    finally instantiated, if at all possible, such as to ensure the existence of a
    term~$g$ as needed for the right-hand side.
    See \cite{PWZBook1996} for details about the first two generations.

    The third generation was initiated by Apagodu and
    Zeilberger~\cite{mohammed05,apagodu05}. In a sense, they applied a
    second-generation algorithm by hand to a generic input and worked out the
    resulting linear system of equations for the parameters~$\ell_i$ and the
    coefficients inside the desired term~$g$. Their algorithm then merely consists
    in solving this system. This approach is interesting not only because it is
    easier to implement and tends to run faster than earlier algorithms, but also
    because it is easy to analyze. In fact the analysis of algorithms from this
    family gives rise to the best output size estimates for creative telescoping
    known so far~\cite{chen12c,chen12b,chen14a}. A disadvantage is that these algorithms
    may not always find the smallest possible output.

    The fourth generation of creative telescoping algorithms originates
    from~\cite{BCCL2010}. The basic idea behind these algorithms is to bring each term
    $f(x+i,y)$ of the left-hand side into some kind of normal forms modulo all terms
    that are differences of other terms. Then to find $\ell_0,\dots,\ell_r$ amounts
    to finding a linear dependence among these normal forms. The key advantage
    of this approach is that it separates the computation of the $\ell_i$ from the
    computation of~$g$. This is desirable in the typical situation where we are only
    interested in the $\ell_i$ and their size is much smaller than the size
    of~$g$. With previous algorithms there was no way to obtain~$\ell_i$ without
    also computing~$g$, but with fourth generation algorithms there is.
    So far this approach has only been worked out for several instances in
    the differential case~\cite{BCCL2010,bostan13,BCCLX2013}. The goal of the present paper is to give
    a fourth-generation algorithm for the discrete case, namely for the classical
    setting of hypergeometric telescoping.

    Our starting point is the Abramov-Petkov\v sek reduction for hypergeometric
    terms introduced in~\cite{Abramov2001} and summarized in
    Section~\ref{SECT:apreduction} below.  Unfortunately the reduced forms obtained
    by this reduction are not sufficiently \lq\lq normal\rq\rq~for our purpose. Therefore, in
    Sections~\ref{SECT:iapreduction} and~\ref{SECT:sum} we present a refined variant
    of the reduction process and show that the corresponding normal forms are
    well-behaved with respect to taking linear combinations. Then in
    Section~\ref{SECT:tele} we describe the creative telescoping algorithm obtained
    from this reduction. The final section contains an experimental comparison
    between this algorithm and the built-in algorithm of Maple.

    \section{Preliminaries}\label{SECT:hts}
    Throughout the paper, we let~$\bF$ be a field of characteristic zero, and~$\bF(y)$ be the field of rational functions
    in~$y$ over~$\bF$. Let~$\sigma_y$ be the automorphism that maps~$r(y)$ to~$r(y+1)$ for every~$r \in \bF(y)$.
    The pair~$(\bF(y), \sigma_y)$ is called a difference field. A difference ring extension of~$(\bF(y),\sigma_y)$
        is a ring $\bD$ containing~$\bF(y)$ together with a distinguished endomorphism $\sigma_y\colon\bD\to\bD$
        whose restriction to $\bF(y)$ agrees with the automorphism defined before.
        An element $c\in\bD$ is called a constant if $\sigma_y(c)=c$.
        For a nonzero polynomial~$p \in \bF[y]$, its degree and leading
    coefficient are denoted by~$\deg_y(p)$ and~$\lc_y(p)$, respectively.
    \begin{definition}\label{DEF:ht}
        Let~$\bD$ be a difference ring extension of~$\bF(y)$. A nonzero element~$T \in \bD$ is called
        a {\em hypergeometric term} over~$\bF(y)$ if $\sigma(T) = r T$ for some~$r \in \bF(y)$. We call~$r$ the
        {\em shift quotient} of~$T$ w.r.t.~$y$.
    \end{definition}
    The product of hypergeometric terms is again hypergeometric. Two hypergeometric terms~$T_1, T_2$ are
    called {\em similar} if there exists a rational function~$r\in \bF(y)$ s.t.~$T_1 = r T_2$. By Proposition~5.6.2
    in~\cite{PWZBook1996}, the sum of similar hypergeometric terms is either hypergeometric or zero.

    A univariate hypergeometric term~$T$ is called {\em hypergeometric summable}
    if there exists another hypergeometric term~$G$ s.t.~$T = \Delta_y(G)$, where~$\Delta_y$ denotes the difference
    of~$\sigma_y$ and the identity map. We abbreviate \lq\lq hypergeometric summable\rq\rq\ as \lq\lq summable\rq\rq\ in this paper.

    Given a hypergeometric term~$T$, we let~$\bU_T$ be the union of~$\{0\}$ and the set of
    summable hypergeometric terms that are similar to~$T$. Then~$\bU_T$ is an $\bF$-linear subspace
    of~$\bD$. Note that~$\bU_T=\bU_H$ if~$H$ is a hypergeometric term similar to~$T$.

    Recall \cite[\S 1]{Abramov2001} that a nonzero polynomial in~$\bF[y]$ is said to be {\em shift-free}
    if it is coprime with any of its nontrivial shifts.
    A nonzero rational function is said to be {\em shift-reduced} if its numerator is coprime with any
    shift of its denominator.

    A basic property of shift-reduced rational functions is given below.
    \begin{lemma} \label{LM:shiftreduced}
        Let~$f \in \bF(y)$ be shift-reduced and unequal to one. If there exists~$r \in \bF[y]$
        s.t.~$f \sigma_y(r) -  r = 0,$
        then~$r = 0$. 
    \end{lemma}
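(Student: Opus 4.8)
The plan is to argue by contradiction: assume the identity $f\sigma_y(r)=r$ holds with $r\neq 0$ and show this forces $f=1$. Write $f=a/b$ with $a,b\in\bF[y]$ nonzero and coprime; since $f$ is shift-reduced we moreover have $\gcd(a,\sigma_y^k(b))=1$ for every nonzero $k\in\bZ$. Clearing denominators turns the hypothesis into the polynomial identity
\[
 a(y)\,r(y+1)=b(y)\,r(y).
\]
First I would dispose of the case $\deg_y(r)=0$: a nonzero constant $r$ forces $a=b$, hence $f=1$, contrary to hypothesis. So from now on $r$ is nonconstant and therefore has a root in an algebraic closure $\bar\bF$ of $\bF$.

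Next I would examine a full $\bZ$-orbit of roots of $r$. Fixing a root $\beta\in\bar\bF$ of $r$, the set $\{k\in\bZ : r(\beta+k)=0\}$ is finite and nonempty; let $m$ and $M$ be its least and greatest elements and set $\alpha=\beta+m$ and $\alpha'=\beta+M$. Then $r(\alpha)=r(\alpha')=0$, while $r(\alpha-1)\neq 0$ and $r(\alpha'+1)\neq 0$. The point of singling out these two extremal roots is that substituting $y=\alpha-1$ into the polynomial identity makes the left-hand side vanish through the factor $r(\alpha)$, leaving $b(\alpha-1)\,r(\alpha-1)=0$ and hence $b(\alpha-1)=0$; symmetrically, substituting $y=\alpha'$ makes the right-hand side vanish and yields $a(\alpha')=0$.

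Finally, with $j=M-m+1\geq 1$ we have $\alpha-1=\alpha'-j$, so $b(\alpha'-j)=0=a(\alpha')$, which means $y-\alpha'$ is a common factor of $a(y)$ and $\sigma_y^{-j}(b)=b(y-j)$ over $\bar\bF$, whence $\gcd(a,\sigma_y^{-j}(b))\neq 1$ already in $\bF[y]$. Since $-j\neq 0$, this contradicts the shift-reducedness of $f$, and the proof is complete. (Alternatively one could first note that coprimality forces $a\mid r$ and $b(y-1)\mid r$ in $\bF[y]$ and induct on $\deg_y(r)$, but the orbit argument above is shorter.)

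I do not expect a serious obstacle here. The one thing to be careful about is the bookkeeping around the orbit: choosing the extremal roots so that each of the two substitutions annihilates exactly one side of the identity, and keeping track of the sign of the shift $-j$ so that it is genuinely nonzero and really does violate the shift-reduced condition. Multiplicities of roots never enter, so no finer analysis is needed.
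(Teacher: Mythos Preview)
Your argument is correct. The paper's own proof is a two--line version of the same idea: from $r\neq 0$ it writes $f=r/\sigma_y(r)$, notes that $f\neq 1$ forces $r\notin\bF$, and then simply asserts that a quotient $r/\sigma_y(r)$ with nonconstant $r$ cannot be shift-reduced. Your orbit argument on the roots of $r$ is exactly what one does to justify that last assertion in detail --- picking the extremal shifts of a root to produce a common irreducible factor of $a$ and some nontrivial shift of $b$. So the approaches are not genuinely different: yours is the unpacked form of theirs. The paper's phrasing is slicker and foregrounds the structural point (anything of the shape $r/\sigma_y(r)$ fails shift-reducedness), while your version is fully self-contained and leaves nothing for the reader to verify; both reach the contradiction through the same mechanism.
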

    \begin{proof}
        Suppose that~$r \neq 0$. Then~$f = r/\sigma_y(r)$. Since~$f$ is unequal to one, $r$ does not belong to~$\bF$.
        It follows that~$f$ is not shift-reduced, a contradiction.
    \end{proof}

    According to~\cite{Abramov2001, AbramovPetkovsek2002b}, every hypergeometric term~$T$ has a multiplicative
    decomposition~$S H$, where~$S$ is in~$\bF(y)$ and~$H$ is another hypergeometric term whose shift quotient is shift-reduced.
    We call the shift quotient~$K :=\sigma_y (H) /H$ a {\em kernel} of~$T$ w.r.t.~$y$ and~$S$ the corresponding {\em shell}.
    Note that~$K=1$ if and only if~$T$ is a rational function, which is then equal to~$cS$ for some element $c\in \bD$
        with $\sigma_y(c)=c$.

    Let~$T=SH$ be a multiplicative decomposition, where~$S$ is a rational function
    and~$H$ a hypergeometric term with a kernel~$K$.
    Assume that~$T=\Delta_y(G)$ for some hypergeometric term~$G$. A straightforward calculation shows that~$G$
    is similar to~$T$. So there exists~$r \in \bF(y)$ s.t.~$G = r H$. One can easily verify that
    \begin{equation}\label{EQ:summable}
        SH  = \Delta_y(rH)\, \Longleftrightarrow \, S = K \sigma_y(r)-r.
    \end{equation}
    Let~$\bV_K = \{ K \sigma_y(r)-r \mid r \in \bF(y) \}$, which is an $\bF$-linear subspace of~$\bF(y)$.
    Then~\eqref{EQ:summable} translates into
    \begin{equation} \label{EQ:congsum}
        S H \equiv 0 \mod \bU_H \, \Longleftrightarrow \, S \equiv 0 \mod \bV_K.
    \end{equation}
    These congruences enable us to shorten expressions.

    \section{Abramov-Petkov\v{s}ek reduction}\label{SECT:apreduction}
    Reduction algorithms have been developed for computing additive decompositions of rational
    functions~\cite{Abramov1975}, hyperexponential functions~\cite{BCCLX2013}, and
    hypergeometric terms~\cite{Abramov2001, AbramovPetkovsek2002b}. These algorithms can be viewed as generalizations of
    the Gosper  algorithm~\cite{Gosper1978,PWZBook1996} and its differential analogue~\cite{Almkvist1990}.

    The Abramov-Petkov\v{s}ek reduction~\cite{Abramov2001, AbramovPetkovsek2002b} is fundamental for this paper.
    To describe it concisely, we need a notational convention and a technical definition.
    \begin{notation} \label{CON:kernel}
        Let~$T$ be a hypergeometric term whose kernel is~$K$ and the corresponding shell is~$S$. Then~$T=SH$, where~$H$ is a hypergeometric term whose shift quotient is~$K$. Assume that~$K$ is unequal to one.
        Moreover, write~$K = u/v$, where~$u,v$ are polynomials in~$\bF[y]$ with~$\gcd(u,v)=1$.
    \end{notation}
    \begin{definition}\label{DEF:prime}
        A nonzero polynomial~$p$ in~$\bF[y]$ is said to be {\em strongly prime} with~$K$
        if~$\gcd\left( p, \sigma_y^{-i}(u) \right) {=} \gcd\left( p, \sigma_y^{i}(v) \right) {=} 1$
        for all~$i \ge 0$.
    \end{definition}

    The proof of Lemma~3 in~\cite{Abramov2001} contains a reduction algorithm whose inputs and outputs
    are  given below.

    \medskip \noindent
    {\bf AbramovPetkov\v{s}ekReduction:}
    Given~$K$ and~$S$ as defined in Convention~\ref{CON:kernel},  compute
    a rational function~$S_1 {\in} \bF(y)$ and polynomials~$b, w \in \bF[y]$ s.t.~$b$ is shift-free and
    strongly prime with~$K$, and the following equation holds:
    \begin{equation} \label{EQ:ap}
        S =  K \sigma_y(S_1) - S_1  + \frac{w}{ b\cdot  \sigma_y^{-1}(u)\cdot v}.
    \end{equation}


    The algorithm contained in the proof of Lemma~3
    in~\cite{Abramov2001} is described as pseudo code on page~4 of the same paper,
    in which the last ten lines are to make the denominator of the rational function~$V$ in its output
    minimal in some technical sense. We shall not execute these lines. Then the {algorithm} will compute
    two rational functions~$U_1$ and~$U_2$. They correspond to~$S_1$ and~$w/\left(
    b\sigma_y^{-1}(u)v\right)$ in~\eqref{EQ:ap}, respectively.

    We slightly modify the output of the Abramov-Petkov\v{s}ek reduction.
    Note that~$K$ is shift-reduced and~$b$ is strongly prime with~$K$.
    Thus,~$b$, $\sigma_y^{-1}\left( u\right)$ and~$v$ are pairwise coprime.
    By partial fraction decomposition,~\eqref{EQ:ap} can be rewritten as
    \[   S = K \sigma_y(S_1) - S_1  +  \left(\frac{a}{b} + \frac{p_1}{\sigma_y^{-1}(u)}+\frac{p_2}{v}\right),  \]
    where~$a, p_1, p_2 \in \bF[y]$.
    Furthermore, we set~$r = p_1/\sigma_y^{-1}(u)$. A direct calculation yields
    $r  = K \sigma_y(-r) - (-r) + \sigma_y(p_1)/v$.
    Update~$S_1$ to be~$S_1 - r$ and set~$p$ to be~$\sigma_y(p_1) + p_2$. Then
    \begin{equation} \label{EQ:sap}
        S = K \sigma_y(S_1) - S_1 +  \frac{a}{b} + \frac{p}{v}.
    \end{equation}
    This modification leads to shell reduction specified below.

    \medskip \noindent
    {\bf ShellReduction}:
    Given~$K$ and~$S$ as defined in Convention~\ref{CON:kernel},  compute
    a rational function~$S_1 \in \bF(y)$ and polynomials~$a, b, p \in \bF[y]$ s.t.~$b$ is shift-free and
    strongly prime with~$K$, and that~\eqref{EQ:sap} holds.

    Shell reduction provides us with a necessary condition on summability.
    \begin{prop} \label{PROP:shell}
        With Convention~\ref{CON:kernel}, assume that~$a, b, p$ are polynomials in~$\bF[y]$  s.t.~$b$ is shift-free
        and strongly prime with~$K$. Assume further that~\eqref{EQ:sap} holds. If~$T$ is summable, then~$a/b$ belongs to~{$\bF[y]$}.
    \end{prop}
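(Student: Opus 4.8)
\medskip
\noindent\textbf{Proof plan.}\
The plan is to translate summability into a rational identity and then run a dispersion argument on the poles of its right-hand side. First I would use that $T=SH$ is summable: by~\eqref{EQ:congsum} this means $S\in\bV_K$, i.e.\ $S=K\sigma_y(r)-r$ for some $r\in\bF(y)$; subtracting~\eqref{EQ:sap} and renaming $r-S_1$ again as~$r$ gives
\[
\frac{a}{b}+\frac{p}{v}\;=\;K\sigma_y(r)-r .
\]
Hence it suffices to prove the purely rational statement: if $b$ is shift-free and strongly prime with $K=u/v$ and $\tfrac{a}{b}+\tfrac{p}{v}\in\bV_K$, then $b\mid a$.

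I would prove this by contradiction. Passing to $\overline{\bF}$, fix a root~$\beta$ of~$b$ whose multiplicity in~$b$ exceeds its multiplicity in~$a$. Since $b$ is strongly prime with~$K$ we have $\gcd(b,v)=1$, so $\tfrac{p}{v}$ is regular at~$\beta$, and hence the left-hand side, and therefore $K\sigma_y(r)-r$, has a genuine pole at~$\beta$. Write $r=e/f$ in lowest terms and put $M=\{k\in\bZ:\ f(\beta+k)=0\}$, a finite set. Comparing the orders at~$\beta$ of the two summands $K\sigma_y(r)=\tfrac{u\,\sigma_y(e)}{v\,\sigma_y(f)}$ and $-\tfrac{e}{f}$, and using that strong primality forces $u(\beta)v(\beta)\ne0$, one sees that the pole at~$\beta$ forces $0\in M$ or $1\in M$; in particular $M\ne\emptyset$.

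The heart of the proof --- and the step I expect to be the main obstacle --- is an extremal analysis along the coset $\beta+\bZ$, driven by two consequences of the hypotheses. Since $b$ is shift-free, $\beta$ is its only zero in $\beta+\bZ$, so on this coset the left-hand side has no pole at any $\beta+j$ with $j>0$, while for $j<0$ a pole at $\beta+j$ can only be contributed by the fixed denominator~$v$ of $\tfrac{p}{v}$. And since $b$ is strongly prime with~$K$, $u$ has no zero in $\beta+\bZ_{\le0}$ and $v$ has no zero in $\beta+\bZ_{\ge0}$. Set $k_{+}=\max M$: at $\beta+k_{+}$ the polynomial $\sigma_y(f)$ does not vanish (as $k_{+}+1\notin M$), and if $k_{+}>0$ then $v(\beta+k_{+})\ne0$ as well, so $K\sigma_y(r)$ is regular at $\beta+k_{+}$ whereas $-r$ has a pole there; thus $K\sigma_y(r)-r$ has a pole at $\beta+k_{+}$, which contradicts regularity of the left-hand side there whenever $k_{+}>0$. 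So $k_{+}\le0$, and combined with ``$0\in M$ or $1\in M$'' this forces $k_{+}=0$ and $0\in M$. Next set $k_{-}=\min M\ (\le0)$ and examine $\beta+k_{-}-1$, a point of negative index: there $u$ is nonzero and $f$ is nonzero while $\sigma_y(f)$ vanishes to order $\ge1$, so $K\sigma_y(r)$, hence $K\sigma_y(r)-r$, has a pole at $\beta+k_{-}-1$ of order strictly larger than the order of~$v$ at that point; but the only possible pole of the left-hand side at $\beta+k_{-}-1$ comes from $\tfrac{p}{v}$ and so has order at most that of~$v$ there. This contradiction yields $b\mid a$, i.e.\ $a/b\in\bF[y]$. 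The remaining steps --- the order computations for $K\sigma_y(r)$ and $-\tfrac{e}{f}$ at points of $\beta+\bZ$ --- are routine; the only places needing genuine care are keeping straight on which side of~$\bZ$ the zeros of~$u$ and of~$v$ are excluded, and ensuring that the pole created on the right-hand side at $\beta+k_{-}-1$ is strictly deeper than whatever $\tfrac{p}{v}$ can contribute there, so that the final step is a true contradiction rather than a mere balance of orders.
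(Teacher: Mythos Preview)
Your argument is correct. The extremal pole-chase along the coset $\beta+\bZ$ goes through cleanly: strong primality pins down where $u$ and $v$ can and cannot vanish, shift-freeness of $b$ localizes the left-hand side's poles, and the $k_+$ and $k_-$ steps both produce genuine contradictions (in the $k_-$ step the numerator $u\,\sigma_y(e)$ is indeed nonzero at $\beta+k_--1$, so the pole order really is $m_v+m_f>m_v$ with no cancellation).

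Your route, however, is quite different from the paper's. The paper does not argue with poles at all: it rewrites $(a/b+p/v)H$ as $(av/b+p)H'$ where $H'=H/v$ has shift-reduced kernel $K'=u/\sigma_y(v)$, observes that $b$ is still strongly prime with~$K'$, and then invokes Theorem~11 of Abramov--Petkov\v{s}ek~\cite{AbramovPetkovsek2002b} as a black box to conclude that the denominator $b$ must divide the numerator of the shell, hence $a/b\in\bF[y]$ since $\gcd(b,v)=1$. So the paper's proof is three lines long but rests on an external structural result about minimal decompositions, while yours is self-contained and essentially reproves (the relevant special case of) that theorem by a direct dispersion argument. Your approach makes the mechanism transparent and avoids dependence on \cite{AbramovPetkovsek2002b}; the paper's approach is much shorter in print and highlights that this proposition is really a corollary of known Abramov--Petkov\v{s}ek theory rather than something requiring new analysis.
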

    \begin{proof} Recall that~$T=SH$ by Convention~\ref{CON:kernel} and it has a kernel~$K$ and the corresponding shell~$S$.
        It follows from \eqref{EQ:congsum} and~\eqref{EQ:sap} that
        $T \equiv \left( a/b+p/v\right) H \mod \bU_H$. Thus,~$T$ is summable if and only if~$\left( a/b+p/v\right) H$
        is summable.

        Set~$H^\prime=(1/v)H$, which has a kernel~$K^\prime=u/\sigma_y(v)$. 
        Note that~$b$ is also strongly prime with~$K^\prime$. We can apply  Theorem~11 in~\cite{AbramovPetkovsek2002b}
        to~$\left(av/b+p\right)H^\prime$, which equals~$\left( a/b+p/v\right) H.$
        Thus,~$a/b$ is a polynomial because~$b$ is coprime with~$v$.
    \end{proof}
    
    \begin{example}\label{EX:nonsummable}
        Let~$T = y^2 y!/(y+1)$. Then the term has a kernel~$K = y+1$ and the shell~$S = y^2/(y+1)$. Shell reduction yields
        $ S \equiv {-1/(y+2) + y/v }\mod \bV_K$
        where~$v {=} 1$.
        By Proposition~\ref{PROP:shell},~$T$ is not summable.
    \end{example}
    \noindent Note that~$a{/}b+p{/}v$ in \eqref{EQ:sap} can be
    nonzero for a summable~$T$.
    \begin{example}\label{EX:summable}
        Let~$T = y \cdot y!$ whose kernel is~$K =
        y+1$ and shell is~$S = y$.
        Then~$S \equiv  y/v \mod \bV_K$, where~$v=1$. But~$T$ is summable as it is equal
        to~$\Delta_y\left(y!\right)$.
    \end{example}
    The above example illustrates that neither shell reduction nor the Abramov-Petkov\v{s}ek reduction can decide
    summability directly. One way to proceed is to
    find a polynomial solution of an
    auxiliary first-order linear difference equation~\cite{AbramovPetkovsek2002b}.
    We show how this can be avoided in the next section.

    \section{Modifications} \label{SECT:iapreduction}
    After the shell reduction described in~\eqref{EQ:sap}, it
    remains to check the summability of~$\left(a/b+p/v\right)H$. In the
    rational case, i.e.\ when the kernel~$K$ is one,~$a{/}b + p{/}v$
    in~\eqref{EQ:sap} can be further reduced to~$a{/}b$ with~$\deg_y(a) < \deg_y(b)$, because all polynomials are
    rational summable. However, a hypergeometric term with a polynomial
    shell is not necessarily summable, for example, $y!$ has a polynomial shell but it is not summable.

    We define the notion of discrete residual forms for rational functions, and  present
    a discrete variant of the polynomial reduction for
    hyperexponential functions given in~\cite{BCCLX2013}. This variant
    not only leads to a direct way to decide summability, but also
    reduces the number of terms of~$p$ in~\eqref{EQ:sap}.

    \subsection{Discrete residual forms} \label{SUBSECT:residual}
    With Convention \ref{CON:kernel}, we define an~$\bF$-linear
    map~$\phi_K$ from~$\bF[y]$ to itself by sending~$p$ to~$u
    \sigma_y(p) - v p$ for all~$p \in \bF[y]$. We call~$\phi_K$ the
    {\em map for polynomial reduction w.r.t.~$K$}.
    \begin{lemma} \label{LM:direct}
        Let
        \[ \bW_K = \spanning_{\bF}\left\{y^{\ell} \mid \ell \in {\mathbb{N}}, \ell
        \neq \deg_y (p) \text{ for all } p \in \im\left(\phi_K \right) \right\}.\]
        Then~$\bF[y] = \im\left( \phi_K \right) \oplus \bW_K$.
    \end{lemma}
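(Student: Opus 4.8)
The plan is to deduce the claim from an elementary fact about \emph{arbitrary} $\bF$-linear subspaces of $\bF[y]$: if $M \subseteq \bF[y]$ is such a subspace and $D(M) := \{\deg_y(q) \mid q \in M \setminus\{0\}\} \subseteq \bN$ denotes the set of degrees realized by its nonzero elements, then
\[
\bF[y] \;=\; M \oplus \spanning_{\bF}\bigl\{ y^\ell \mid \ell \in \bN \setminus D(M)\bigr\}.
\]
The lemma is precisely the instance $M = \im(\phi_K)$, for which the displayed complement is by definition $\bW_K$. In particular no structural feature of $\phi_K$ enters; we do not even need that $\phi_K$ is injective, although it is, by Lemma~\ref{LM:shiftreduced} applied to the identity $\phi_K(p) = v\,\bigl(K\sigma_y(p) - p\bigr)$, and this injectivity is what additionally forces $\bW_K$ to be finite-dimensional.

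To prove the general fact, put $\bW := \spanning_{\bF}\{y^\ell \mid \ell \notin D(M)\}$ and verify the two conditions for an internal direct sum. For $M \cap \bW = \{0\}$: a nonzero $q \in \bW$ is an $\bF$-combination of monomials whose exponents avoid $D(M)$, so $\deg_y(q) \notin D(M)$, whereas every nonzero element of $M$ has degree in $D(M)$; hence no nonzero $q$ lies in both. For $M + \bW = \bF[y]$, I would use strong induction on $\deg_y(f)$, the zero polynomial being the trivial base case. Given $f \neq 0$ with $\ell_0 := \deg_y(f)$ and leading coefficient $c$: if $\ell_0 \notin D(M)$ then $c\,y^{\ell_0} \in \bW$ and $f - c\,y^{\ell_0}$ has smaller degree; if $\ell_0 \in D(M)$, pick $q \in M$ with $\deg_y(q) = \ell_0$, so that $f - \bigl(c/\lc_y(q)\bigr)\,q$ has smaller degree while the subtracted term lies in $M$. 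In either case the induction hypothesis writes the lower-degree remainder as an element of $M + \bW$, hence so is $f$.

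I do not expect a genuine obstacle: the content is just that the decomposition sees only the degree structure of the subspace $\im(\phi_K)$ and is therefore insensitive to how $\phi_K$ is defined. An equivalent route, if a more constructive statement is preferred, is to choose for each $d \in D(M)$ a monic $q_d \in M$ of degree $d$; the usual triangularity argument shows $\{q_d \mid d \in D(M)\}$ is a basis of $M$, and adjoining $\{y^\ell \mid \ell \notin D(M)\}$ produces a family containing exactly one polynomial of each degree in $\bN$, hence a basis of $\bF[y]$, which is the asserted direct sum.
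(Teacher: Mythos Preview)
Your proof is correct and matches the paper's approach: the paper also argues that $\im(\phi_K)\cap\bW_K=\{0\}$ directly from the definition, then observes that one can pick a polynomial $f_m$ of each degree $m$ from $\im(\phi_K)\cup\bW_K$, so that $\{f_0,f_1,\ldots\}$ is a basis of $\bF[y]$---precisely your ``alternative route.'' Your framing for an arbitrary subspace $M\subseteq\bF[y]$ is a harmless generalization the paper does not make explicit, and your induction argument for $M+\bW=\bF[y]$ simply supplies the detail behind the paper's one-line claim that the $f_m$ form a basis.
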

    \begin{proof}
        By the definition of~$\bW_K$, $\im\left(\phi_K\right) \cap \bW_K = \{0\}$.
        The same definition also implies that, for every non-negative integer~$m$, there exists a polynomial~$f_m \in \im\left(\phi_K\right) \cup \bW_K$
        s.t.~the degree of~$f_m$ is equal to~$m$.  The set~$\{f_0, f_1, f_2, \ldots \}$ forms an $\bF$-basis of~$\bF[y]$. Thus $\bF[y] = \im\left( \phi_K \right) \oplus \bW_K$.
    \end{proof}
    In view of the above lemma, we
    call~$\bW_K$ the {\em standard complement of~$\im(\phi_K)$}.
    A polynomial~$p \in \bF$ can be uniquely decomposed as~$p=p_1+p_2$
    with~$p_1 \in \im\left( \phi_K\right)$ and~$p_2 \in \bW_K$.
    \begin{lemma} \label{LM:reducedform}
        With Convention~\ref{CON:kernel}, let~$p$ be a polynomial in~$\bF[y]$.
        Then there exists~$q \in \bW_K$ s.t.~$p/v \equiv q/v \mod \bV_K$.
    \end{lemma}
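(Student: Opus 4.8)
The plan is to reduce the statement directly to the direct-sum decomposition $\bF[y] = \im(\phi_K) \oplus \bW_K$ from Lemma~\ref{LM:direct}, the only extra ingredient being that dividing an element of $\im(\phi_K)$ by $v$ produces an element of $\bV_K$.

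First I would use Lemma~\ref{LM:direct} to write $p = p_1 + p_2$ with $p_1 \in \im(\phi_K)$ and $p_2 \in \bW_K$, and pick $h \in \bF[y]$ with $p_1 = \phi_K(h) = u\,\sigma_y(h) - v\,h$. The key computation is then
\[
  \frac{p_1}{v} \;=\; \frac{u\,\sigma_y(h) - v\,h}{v} \;=\; \frac{u}{v}\,\sigma_y(h) - h \;=\; K\,\sigma_y(h) - h,
\]
which lies in $\bV_K$ because $h \in \bF[y] \subseteq \bF(y)$ is an admissible choice of $r$ in the definition of $\bV_K$. Hence $p/v = p_1/v + p_2/v \equiv p_2/v \bmod \bV_K$, and $q := p_2 \in \bW_K$ has the asserted property.

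I do not expect a genuine obstacle here; the proof is essentially a one-line computation once Lemma~\ref{LM:direct} is in place. The only points that deserve a word of care are that the decomposition $p = p_1 + p_2$ is available for \emph{every} polynomial $p$, not just those of bounded degree (which is precisely what Lemma~\ref{LM:direct} guarantees, via the basis $\{f_0,f_1,\dots\}$), and that the shift quotient identity $\phi_K(h)/v = K\sigma_y(h) - h$ uses $\gcd(u,v)=1$ only implicitly, through Convention~\ref{CON:kernel}, so no coprimality hypothesis is actually needed in this step.
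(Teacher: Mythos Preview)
Your proposal is correct and follows essentially the same argument as the paper: decompose $p$ via Lemma~\ref{LM:direct} as $\phi_K(h)+q$ with $q\in\bW_K$, then divide by~$v$ to see that $\phi_K(h)/v = K\sigma_y(h)-h\in\bV_K$, so $p/v\equiv q/v\bmod\bV_K$. The paper's proof is exactly this one-line computation, with your $h$ and $p_2$ called $f$ and~$q$.
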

    \begin{proof}
        Let~$q$ be the projection of~$p$ on~$\bW_K$. Then there exists~$f$ in~$\bF[y]$ s.t.~$p = \phi_K(f) + q$,
        that is,~$p = u \sigma_y(f) - vf + q$.
        So~$p/v = K \sigma_y(f)-f+q/v$, that is,~$p/v \equiv q/v \mod \bV_K$.
    \end{proof}
    \begin{remark} \label{RE:poly}
        Replacing~$p$ in the above lemma by~$vp$, we see that, for every polynomial~$p \in \bF[y]$,
        there exists~$q \in \bW_K$ s.t.~$p \equiv q/v \mod \bV_K$.
    \end{remark}
    By Lemma~\ref{LM:reducedform} and Remark~\ref{RE:poly}, \eqref{EQ:sap} implies that
    \begin{equation} \label{EQ:iap}
        S \equiv \frac{a}{b} + \frac{q}{v} \mod \bV_K,
    \end{equation}
    where~$a, b, q \in \bF[y]$, $\deg_y(a)<\deg_y(b)$, $b$ is shift-free and strongly prime with~$K$,
    and~$q \in \bW_K$.
    The congruence~\eqref{EQ:iap} motivates us to translate the notion of (continuous) residual forms in~\cite{BCCLX2013} into the
    discrete setting.
    \begin{definition} \label{DEF:residual}
        With Convention~\ref{CON:kernel}, we further let~$f$ be a rational function in~$\bF(y)$.
        Another rational function~$r$ in~$\bF(y)$ is called a {\em (discrete) residual form} of~$f$ w.r.t.~$K$
        if there exist~$a,b,q$ in~$\bF[y]$ s.t.
        $$f \equiv r \mod \bV_K \quad \text{and} \quad r = \frac{a}{b}+\frac{q}{v},$$
        where~$\deg_y(a)<\deg_y(b)$, $b$ is shift-free and strongly
        prime with~$K$, and~$q$ belongs to~$\bW_K$. For brevity, we just say that~$r$ is {\em a residual form}
        w.r.t.~$K$ if~$f$ is clear from the context.
    \end{definition}

    Residual forms help us decide summability, as shown in the next proposition.
    \begin{prop}\label{PROP:residual}
        With Convention~\ref{CON:kernel}, we further assume that~$r$ is a
        nonzero residual form w.r.t.~$K$. The hypergeometric
        term~$rH$ is not summable.
    \end{prop}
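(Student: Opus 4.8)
The plan is to argue by contradiction: suppose $rH$ is summable; I will deduce $r=0$, contradicting that $r$ is a nonzero residual form. Since $K$ is a kernel of $rH$ with corresponding shell $r$, the equivalence~\eqref{EQ:congsum} turns summability of $rH$ into $r\in\bV_K$, so there is some $g\in\bF(y)$ with $r=K\sigma_y(g)-g$. Write $r=a/b+q/v$ as in Definition~\ref{DEF:residual}, with $\deg_y(a)<\deg_y(b)$, with $b$ shift-free and strongly prime with $K$, and with $q\in\bW_K$.

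The first step disposes of the summand $a/b$. The identity $r=K\sigma_y(0)-0+a/b+q/v$ is an instance of~\eqref{EQ:sap} for the term $rH$ (take $S_1=0$, and let the polynomial called $p$ in~\eqref{EQ:sap} be $q$), and the conditions on $b$ are precisely those required there. Hence Proposition~\ref{PROP:shell}, applied to $rH$, gives $a/b\in\bF[y]$; writing $a=bc$ with $c\in\bF[y]$ and comparing degrees with $\deg_y(a)<\deg_y(b)$ forces $a=0$. Thus $r=q/v$ with $q\in\bW_K$, and still $q=u\sigma_y(g)-vg$.

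The second and main step is to show that $g$ must be a polynomial. Write $g=f/d$ in lowest terms with $d$ monic. Clearing denominators in $q=u\sigma_y(g)-vg$ gives $q\,d\,\sigma_y(d)=u\,\sigma_y(f)\,d-v\,f\,\sigma_y(d)$; reducing this identity modulo $d$ and modulo $\sigma_y(d)$ and using $\gcd(f,d)=1=\gcd(\sigma_y(f),\sigma_y(d))$ yields $d\mid v\,\sigma_y(d)$ and $\sigma_y(d)\mid u\,d$. Suppose $d$ is nonconstant and pick an irreducible $\pi\mid d$; the set $\{i\in\bZ:\sigma_y^{i}(\pi)\mid d\}$ is finite and nonempty, say with minimum $\ell$ and maximum $h$. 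Since $\sigma_y^{\ell-1}(\pi)\nmid d$, comparing multiplicities of $\sigma_y^{\ell}(\pi)$ in $d\mid v\,\sigma_y(d)$ shows $\sigma_y^{\ell}(\pi)\mid v$; since $\sigma_y^{h+1}(\pi)\nmid d$, comparing multiplicities of $\sigma_y^{h+1}(\pi)$ in $\sigma_y(d)\mid u\,d$ shows $\sigma_y^{h+1}(\pi)\mid u$. As $h\ge\ell$, the nonconstant polynomial $\sigma_y^{h+1}(\pi)$ then divides both $u$ and $\sigma_y^{\,h+1-\ell}(v)$, contradicting that $K=u/v$ is shift-reduced. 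Hence $d$ is constant, $g\in\bF[y]$, and $q=u\sigma_y(g)-vg=\phi_K(g)\in\im(\phi_K)$. Combined with $q\in\bW_K$ and $\im(\phi_K)\cap\bW_K=\{0\}$ (Lemma~\ref{LM:direct}), this gives $q=0$, hence $r=0$ --- the desired contradiction.

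The routine parts are the degree comparison in the first step and the prime-by-prime multiplicity bookkeeping in the second. The crux is the observation that shift-reducedness of $K$ is exactly the obstruction that forbids any genuine denominator in $g$; once $g$ is known to be a polynomial, the statement collapses to the already-established directness $\bF[y]=\im(\phi_K)\oplus\bW_K$.
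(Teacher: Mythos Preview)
Your proof is correct and follows the paper's argument step for step: reduce to $a=0$ via Proposition~\ref{PROP:shell}, show the preimage $g$ of $q$ under $p\mapsto u\sigma_y(p)-vp$ is a polynomial, and finish with the direct-sum decomposition of Lemma~\ref{LM:direct}. The only difference is that where the paper invokes Theorem~5.2.1 of~\cite{PWZBook1996} to conclude $g\in\bF[y]$, you supply a self-contained dispersion-style argument (tracking the extremal shifts of an irreducible factor of the denominator of~$g$ and deriving a contradiction with $K$ being shift-reduced); this makes the proof independent of the external reference at the cost of a few extra lines.
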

    \begin{proof}
        Suppose that~$rH$ is summable. Let~$r=a/b+q/v$,
        where~$\deg_y(a)<\deg_y(b)$, $b$ is shift-free and strongly
        prime with~$K$, and~$q$ belongs to~$\bW_K$.
        By Proposition~\ref{PROP:shell},~$a/b$ is a polynomial.
        Since~$\deg_y(a)<\deg_y(b)$, $a=0$. Thus,~$(q/v)H$ is summable.
        It follows from~\eqref{EQ:summable} that there exists~$w$ in~$\bF(y)$
        s.t.~$u\sigma_y(w)-vw=q$. Thus,~$w \in \bF[y]$ by Theorem~5.2.1 in~\cite[page 76]{PWZBook1996}.
        So~$q$ belongs to~$\im\left(\phi_K\right)$. But~$q$ also belongs to~$\bW_K$.
        By Lemma~\ref{LM:direct},~$q=0$, a contradiction.
    \end{proof}

    With Convention~\ref{CON:kernel}, let~$r$ be a residual form of the shell~$S$.
    Then~$SH \equiv rH \mod \bU_H$ by~\eqref{EQ:congsum} and~\eqref{EQ:iap}.
    By Proposition~\ref{PROP:residual}, $SH$ is summable if and only if~$r=0$.
    Thus, determining the summability of a hypergeometric term~$T$ amounts to computing a residual
    form of the corresponding shell w.r.t.~a kernel of~$T$, which is studied below.
    \subsection{Polynomial reduction}
    To compute a residual form of a rational function, we project a polynomial on
    $\im(\phi_K)$ and on its standard complement~$\bW_K$, both defined in the previous subsection.

    Let~$\bB_K=\left\{ \phi_K(y^i) \mid i \in \bN \right\}$.
    The $\bF$-linear map~$\phi_K$ is injective by Lemma~\ref{LM:shiftreduced}. So~$\bB_K$
    is an $\bF$-basis of~$\im\left( \phi_K \right)$, which allows us to construct an echelon basis.
    By an echelon basis, we mean an $\bF$-basis in which distinct elements have distinct degrees.
    We can easily project a polynomial using an echelon basis and linear elimination.

    To construct an echelon basis, we rewrite~$\im(\phi_K)$ as
    \begin{equation*}
        \im(\phi_K) = \left\{u\Delta_y(p) - (v - u)p \mid p \in \bF[y]
        \right\}.
    \end{equation*}
    Set~$\alpha_1 = \deg_y(u), \alpha_2 = \deg_y(v)$, and~$\beta = \deg_y(v -
    u)$. Moreover, set~$\tau_K = \lc_y(v - u){/}\lc_y(u)$, which is nonzero due to Convention~\ref{CON:kernel} and let~$p$ be a nonzero polynomial in~$\bF[y]$.

    We make the following case distinction.

    \smallskip \noindent
    {\em Case 1.}~$\beta > \alpha_1$. Then~$\beta = \alpha_2$, and
    \begin{equation*}
        \phi_{K}(p) = - \lc_y(v - u)\lc_y(p)y^{\alpha_2 + \deg_y(p)} + \text{ lower terms}.
    \end{equation*}
    So~$\bB_K$ is an echelon basis of~$\im(\phi_{K})$, in
    which~$\deg_y(\phi_{K}(y^i))$ is equal to~$\alpha_2 + i$ for all~$i \in \bN$.
    Accordingly,~$\bW_{K}$ has an echelon basis~$\left\{1, y, \ldots,
    y^{\alpha_2 - 1}\right\}$ and~$\dim(\bW_K)=\alpha_2$.

    \smallskip \noindent
    {\em Case 2.}~$\beta = \alpha_1$. Then
    \begin{equation*}
        \phi_{K}(p) = - \lc_y(v - u)\lc_y(p)y^{\alpha_1 + \deg_y(p)} + \text{ lower terms}.
    \end{equation*}
    So~$\bB_K$ is an echelon basis of~$\im(\phi_{K})$, in
    which~$\deg_y(\phi_{K}(y^i))$ is equal to~$\alpha_1 + i$ for all~$i \in \bN$.
    Accordingly,~$\bW_{K}$ has an echelon basis~$\left\{1, y, \ldots,
    y^{\alpha_1 {-} 1}\right\}$ and~$\dim(\bW_K)=\alpha_1$.

    \smallskip \noindent
    {\em Case 3.}~$\beta < \alpha_1 - 1$. If~$\deg_y(p) = 0$, then~$\phi_{K}(p) = (u - v)p$. Otherwise, we have
    \begin{equation*}
        \phi_{K}(p) = \deg_y(p)\lc_y(u)\lc_y(p)y^{\alpha_1 + \deg_y(p) - 1} + \text{ lower terms}.
    \end{equation*}
    It follows that~$\bB_K$ is an echelon basis of~$\im(\phi_{K})$, in which~$\deg_y(\phi_{K}(1)) = \beta$ and
    \begin{equation*}
        \deg_y(\phi_{K}(y^i)) = \alpha_1 + i - 1 \quad \text{ for all } i \geq 1.
    \end{equation*}
    So~$\bW_{K}$ has an echelon basis~$\left\{1, \ldots, y^{\beta - 1}, y^{\beta + 1}, \ldots,
    y^{\alpha_1 - 1}\right\}$, and~$\dim(\bW_K)=\alpha_1-1$.

    \smallskip \noindent
    {\em Case 4.}~$\beta = \alpha_1 - 1$ and~$\tau_K$ is not a positive integer. Then
    \begin{align}
        \phi_{K}(p) &= \left(\deg_y(p)\lc_y(u) - \lc_y(v - u)\right)\lc_y(p)y^{\alpha_1
            + \deg_y(p) - 1} \nonumber\\
        &\quad + \text{ lower terms}.\label{EQ:case4}
    \end{align}
    So~$\bB_K$ is an echelon basis of~$\im(\phi_{K})$, in which, for all~$i \in \mathbb{N}$,
    $\deg_y(\phi_K(y^i)) = \alpha_1 + i - 1$.
    Accordingly,~$\bW_{K}$ is spanned by an
    echelon basis~$\left\{1, y, \ldots, y^{\alpha_1 - 2}\right\}$, and has dimension~$\alpha_1{-}1$.

    \smallskip \noindent
    {\em Case 5.}~$\beta = \alpha_1 - 1$ and~$\tau_K$ is a positive
    integer. It follows from~\eqref{EQ:case4} that for~$i \neq
    \tau_{K}$,~$\deg_y(\phi_{K}(y^i)) = \alpha_1 + i - 1$. Moreover, for
    every polynomial~$p$ of degree~$\tau_{K}$,~$\phi_{K}(p)$ is of
    degree less than~$\alpha_1 + \tau_{K} - 1$. So any echelon basis
    of~$\im(\phi_{K})$ does not contain a polynomial of degree~$\alpha_1
    + \tau_{K} - 1$. Set
    \begin{equation*}
        \bB_K^\prime = \left\{\phi_{K}(y^i) \mid i \in \bN, i \neq \tau_{K}\right\}.
    \end{equation*}
    Reducing~$\phi_{K}(y^{\tau_{K}})$ by the polynomials
    in~$\bB_K'$, we obtain a polynomial~$p'$ with~$\deg_y(p') <
    \alpha_1 - 1$. Since~$\bB_K$ is an~$\bF$-basis and~$\bB_K'
    \subset \bB_K$,~$p' \neq 0$. So~$\bB'_K \cup \{p'\}$ is
    an echelon basis of~$\im(\phi_{K})$. Consequently,~$\bW_{K}$ is
    spanned by an echelon basis~$\left\{1, y, \ldots, y^{\deg_y(p') {-}
        1}, y^{\deg_y(p') {+} 1}, \ldots, y^{\alpha_1 {-} 2}, y^{\alpha_1 {+}
        \tau_{K} - 1}\right\}$.
    The dimension of~$\bW_K$ is equal to~$\alpha_1-1$.

    \begin{example}\label{EX:case5}
        Let~$K = (y^4 + 1)/(y+1)^4$, which is shift-reduced. Then~$\tau_{K} = 4$. According to Case~5,~$ \im(\phi_{K})$ has an echelon basis
        \[ \left\{ \phi_K\left( p   \right) \right\}\cup
        \left\{ \phi_K\left(y^m \right) \mid m \in \bN, m \neq 4\right\},
        \]
        where~$p=y^4 + y/3 + 1/2$, $\phi_K(p) =  (5/3)y^2+2y+4/3$, and~$\phi_K\left(y^m\right) = (m-4)y^{m+3} + \text{lower terms}.$
        Therefore,~$\bW_{K}$ has a basis~$\{1,y,y^7\}$.
    \end{example}
    From the above case distinction and example, one observes that,
    although the degree of a polynomial in the standard complement
    depends on~$\tau_K$, which may be arbitrarily high, the number of its terms
    depends merely on the degrees of~$u$ and~$v$. We record this observation
    in the next proposition.
    \begin{prop}\label{PROP:termbound}
        With the Convention \ref{CON:kernel}, we further let
        $$\alpha_1 = \deg_y(u), \quad \alpha_2 = \deg_y(v), \quad \text{and} \quad \beta = \deg_y(v -
        u).$$ Then there exists~$\cP \subset \{y^i \mid i \in \mathbb{N}\}$ with
        $$|\cP| \leq \max \{\alpha_1, \alpha_2\}- \llbracket \beta \leq \alpha_1 -1\rrbracket$$
        s.t. every polynomial in~$\bF[y]$ can be reduced modulo~$\im(\phi_{K})$ to an~$\bF$-linear combination of the elements in~$\cP$. Note that here~$\llbracket \beta \leq \alpha_1 -1 \rrbracket$ equals~$1$ if~$\beta \leq \alpha_1 -1$, otherwise it is~$0$.
    \end{prop}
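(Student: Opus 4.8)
The plan is to take $\cP$ to be precisely the set of powers of $y$ that form the monomial echelon basis of the standard complement $\bW_K$ exhibited in the five-case analysis preceding the proposition. By Lemma~\ref{LM:direct} we have $\bF[y] = \im(\phi_K) \oplus \bW_K$, so reducing an arbitrary polynomial modulo $\im(\phi_K)$ is the same as projecting it onto $\bW_K$; since in every one of Cases~1--5 the displayed basis of $\bW_K$ consists of powers of~$y$, this choice of $\cP$ does the job and $|\cP| = \dim_{\bF}(\bW_K)$. Hence it suffices to check, case by case, that $\dim_{\bF}(\bW_K) \le \max\{\alpha_1,\alpha_2\} - \llbracket \beta \le \alpha_1 - 1 \rrbracket$.

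The only thing needed beyond bookkeeping is the elementary way in which $\beta = \deg_y(v-u)$ constrains $\alpha_1$ and $\alpha_2$. If $\beta > \alpha_1$, the leading term of $v-u$ cannot come from $-u$, so $\alpha_2 = \beta > \alpha_1$ and $\max\{\alpha_1,\alpha_2\} = \alpha_2$. If $\beta < \alpha_1$, then $\alpha_2 > \alpha_1$ would force $\beta = \alpha_2$ and $\alpha_2 < \alpha_1$ would force $\beta = \alpha_1$, both impossible, so $\alpha_2 = \alpha_1$ and $\max\{\alpha_1,\alpha_2\} = \alpha_1$; similarly if $\beta = \alpha_1$ then $\alpha_2 \le \alpha_1$, so again $\max\{\alpha_1,\alpha_2\} = \alpha_1$. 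Moreover $\llbracket \beta \le \alpha_1 - 1\rrbracket$ equals~$1$ in Cases~3, 4 and~5 and equals~$0$ in Cases~1 and~2, since these cases are distinguished exactly by the comparison of $\beta$ with $\alpha_1$ (note that $v-u \ne 0$, so $\beta$ is well defined, because $K \ne 1$ by Convention~\ref{CON:kernel}).

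Combining these remarks with the dimension computed in each case finishes the proof: in Case~1, $\dim_{\bF}(\bW_K) = \alpha_2 = \max\{\alpha_1,\alpha_2\} - 0$; in Case~2, $\dim_{\bF}(\bW_K) = \alpha_1 = \max\{\alpha_1,\alpha_2\} - 0$; and in each of Cases~3, 4 and~5, $\dim_{\bF}(\bW_K) = \alpha_1 - 1 = \max\{\alpha_1,\alpha_2\} - 1$. Thus equality in fact holds in the claimed bound in every case. I do not expect a genuine obstacle here: the statement merely records the preceding case analysis, and the only points requiring a little care are the degenerate situations (for instance $v - u$ a nonzero constant, or $\alpha_1 = 0$), which are already absorbed into Cases~2 and~3 and need no separate treatment.
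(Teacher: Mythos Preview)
Your proposal is correct and follows essentially the same route as the paper: the paper's proof is the two-sentence observation that the preceding case distinction shows $\dim(\bW_K)\le\max\{\alpha_1,\alpha_2\}-\llbracket\beta\le\alpha_1-1\rrbracket$, and you have simply expanded that observation by spelling out, for each case, why the Iverson bracket and the maximum come out as claimed. Nothing is missing; your extra detail (the arithmetic relating $\beta$ to $\alpha_1,\alpha_2$) is exactly what the paper leaves implicit.
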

    \begin{proof}
        By the above case distinction, $\dim\left(\bW_{K}\right)$ is
        no more than~$\max \{\alpha_1, \alpha_2\}-\llbracket \beta \leq \alpha_1 -1\rrbracket$. The lemma follows.
    \end{proof}

    The above case distinction enables one to find an infinite sequence~$p_0, p_1, \ldots$ in~$\bF[y]$
    s.t.
    \begin{equation*} 
        \bE_K = \left\{ \phi_K(p_i) | i \in \bN \right\}~\text{with $\deg_y \phi_K(p_i) < \deg_y \phi_K (p_{i+1}),$}
    \end{equation*}
    is an echelon basis of~$\im\left(\phi_K\right)$.
    This basis allows us to project a polynomial on~$\im\left( \phi_K\right)$ and~$\bW_K$, respectively.
    In the first four cases, the~$p_i$'s can be chosen as powers of~$y$. But in the last case, one of the~$p_i$'s
    is not necessarily a monomial as shown in Example~\ref{EX:case5}.

    \medskip \noindent
    {\bf PolynomialReduction}:~Given~$p \in \bF[y]$,
    compute~$f \in \bF[y]$ and~$q \in \bW_K$ s.t.~$p=\phi_K(f)+q.$
    \begin{enumerate}
        \item If~$p=0$, then set~$f=0$ and~$q=0$; return.
        \item Set~$d=\deg_y(p)$. Find the subset~$\bP=\left\{p_{i_1}, \ldots, p_{i_s} \right\}$ consisting of the preimages
        of all polynomials
        in the echelon basis~$\bE_K$  whose degrees are at most~$d$.
        \item For~$k=s, s-1, \ldots, 1,$  perform linear elimination to find~$c_s,$ $c_{s-1}, \ldots,c_1\in\bF$
        s.t.~$p - \sum_{k=1}^s c_k \phi_K(p_{i_k}) \in \bW_K.$
        \item Set~$f=\sum_{k=1}^s c_k p_{i_k}$ and~$q=p- \phi_K(f)$; and return.
    \end{enumerate}
    We now present a modified version of the Abramov-Pet\-kov\-\v{s}ek reduction, which determines
    summability without solving any auxiliary difference equations explicitly.

    \medskip \noindent
    {\bf ModifiedAbramovPetkov\v{s}ekReduction}:~Given an irrational hypergeometric term~$T$ over~$\bF(y)$,
    compute a hypergeometric term~$H$ with a kernel~$K$,
    and two rational functions~$f, r \in \bF(y)$ s.t.~$r$ is a residual form w.r.t.~$K$, and
    \begin{equation} \label{EQ:iapred}
        T = \Delta_y(f H) + r H.
    \end{equation}
    \begin{enumerate}
        \item Find a kernel~$K$ and the corresponding shell~$S$ of~$T$;

                  \kern-\medskipamount
        \item Apply shell reduction to~$S$ w.r.t.~$K$ to find~$b, s, t \in \bF[y]$ and~$g \in \bF(y)$ s.t.~$b$ is shift-free
        and strongly prime with~$K$; and
        \begin{equation} \label{EQ:shellred}
            T = \Delta_y\left( g H\right) + \left( \frac{s}{b} + \frac{t}{v} \right) H,
        \end{equation}
        where~$\sigma_y(H)/H=K$ and~$v$ is the denominator of~$K$.

                  \kern-\medskipamount
        \item Set~$p$ and~$a$ to be the quotient and remainder of~$s$ and~$b$, respectively.

                  \kern-\medskipamount
        \item Apply polynomial reduction to~$vp+t$ to find~$h \in \bF[y]$ and~$q \in \bW_K$ s.t.~$vp+t = \phi_K(h) + q$.

                  \kern-\medskipamount
        \item Set~$f:=g{+}h$ and~$r:=a/b{+}q/v$ and return~$H$, $f$ and~$r$.
    \end{enumerate}
    \begin{theorem}\label{THM:iapred}
        With Convention~\ref{CON:kernel}, the modified version of the Abramov-Petkov\v{s}ek reduction computes
        a rational function~$f$ in~$\bF(y)$ and a residual form~$r$ w.r.t.~$K$ s.t.~\eqref{EQ:iapred} holds.
        Moreover,~$T$ is summable if and only if $r = 0$.
    \end{theorem}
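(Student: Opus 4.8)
The plan is to walk through the five steps of the algorithm \textbf{ModifiedAbramovPetkov\v{s}ekReduction} and verify, in order, that \eqref{EQ:iapred} holds, that the returned~$r$ satisfies the three defining conditions in Definition~\ref{DEF:residual}, and that $T$ is summable precisely when $r=0$. (Since $T$ is irrational, the kernel~$K$ produced in Step~1 is unequal to one, so Convention~\ref{CON:kernel} applies throughout.)

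\emph{Proof of \eqref{EQ:iapred}.} Step~2 yields \eqref{EQ:shellred}, which is the shell reduction \eqref{EQ:sap} transported to hypergeometric terms via \eqref{EQ:summable}. In Step~3 we write $s = pb + a$ with $\deg_y(a)<\deg_y(b)$, so $s/b = p + a/b$ and hence
\[
    T = \Delta_y(gH) + \frac{a}{b}\,H + \Big(p + \frac{t}{v}\Big)H .
\]
Step~4 produces $h\in\bF[y]$ and $q\in\bW_K$ with $vp+t = \phi_K(h)+q = u\,\sigma_y(h) - vh + q$; dividing by~$v$ gives $p + t/v = K\,\sigma_y(h) - h + q/v$, which by \eqref{EQ:summable} reads $(p+t/v)H = \Delta_y(hH) + (q/v)H$. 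Substituting this and merging the telescoped parts,
\[
    T = \Delta_y\big((g+h)H\big) + \Big(\frac{a}{b} + \frac{q}{v}\Big)H = \Delta_y(fH) + rH ,
\]
with $f=g+h$ and $r=a/b+q/v$ as set in Step~5.

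\emph{$r$ is a residual form of the shell~$S$ w.r.t.~$K$.} The polynomial~$b$ is shift-free and strongly prime with~$K$ by the specification of shell reduction invoked in Step~2; $\deg_y(a)<\deg_y(b)$ by the Euclidean division of Step~3; and $q\in\bW_K$ by the specification of polynomial reduction in Step~4. Finally, since $T - rH = \Delta_y(fH)\in\bU_H$, equation~\eqref{EQ:congsum} gives $S \equiv r \bmod \bV_K$, so all requirements of Definition~\ref{DEF:residual} are met. \emph{Summability.} By \eqref{EQ:iapred}, $T$ is summable if and only if $rH$ is, because $\Delta_y(fH)$ is summable; if $r=0$ this is immediate, and if $r\neq 0$ then $rH$ is not summable by Proposition~\ref{PROP:residual}, hence neither is~$T$. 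Therefore $T$ is summable if and only if $r=0$.

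The argument is largely bookkeeping; the only points requiring care are the repeated translation between the additive identities $T=\Delta_y(\cdot)+(\cdot)H$ and the rational identities $S = K\sigma_y(\cdot) - (\cdot) + (\cdot)$ supplied by \eqref{EQ:summable} and \eqref{EQ:congsum}, and the matching of each of the three residual-form conditions to exactly what the corresponding subroutine guarantees. The substantive ingredient is Proposition~\ref{PROP:residual}, which carries the non-summability direction.
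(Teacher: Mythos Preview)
Your proof is correct and follows essentially the same route as the paper's: both walk through the algorithm, rewrite~\eqref{EQ:shellred} using the Euclidean division $s=pb+a$, substitute the polynomial-reduction identity $vp+t=u\sigma_y(h)-vh+q$, and invoke Proposition~\ref{PROP:residual} for the summability criterion. You are a bit more explicit than the paper in checking that $r$ actually satisfies all three conditions of Definition~\ref{DEF:residual} and in spelling out both directions of the summability equivalence, but this is the same argument with the bookkeeping made visible.
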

    \begin{proof}
        Recall that~$T=SH$, where~$H$ has a kernel~$K$ and~$S$ is a rational function.
        Applying shell reduction to~$S$ w.r.t.~$K$ yields~\eqref{EQ:shellred}, which can be rewritten as
        \[  T = \Delta_y\left(g H\right) + \left( \frac{a}{b} + \frac{vp+t}{v} \right) H, \]
        where~$a$ and~$p$ are given in step~3 of the modified Abramov-Petkov\v{s}ek reduction. The polynomial
        reduction in step~4 yields that~$vp+t=u\sigma_y(h)-vh + q$. Substituting this into~\eqref{EQ:shellred},
        we see that
        \begin{align*}
            T & =  \Delta_y(gH) + \left(K \sigma_y(h)-h \right) H  + \left(\frac{a}{b} + \frac{q}{v}\right)H  \\
            & =  \Delta_y((g+h)H) +   rH,
        \end{align*}
        where~$r = a/b + q/v$. Thus,~\eqref{EQ:iapred} holds.
        By Proposition~\ref{PROP:residual},~$T$ is summable if and only~$r$ is equal to zero.
    \end{proof}
    \begin{example}\label{EX:nonsummable1}
        Let~$T$ be the same hypergeometric term as in
        Example~\ref{EX:nonsummable}. Then~$K = y+1$ and~$S = y^2/(y+1)$. Set~$H = y!$. By the shell reduction in Example~\ref{EX:nonsummable},
        \begin{equation*}
            T = \Delta_y\left(\frac{-1}{y+1} H\right) + \left(\frac{-1}{y+2} + \frac{y}{v}\right) H,
        \end{equation*}
        where~$v = 1$. Applying the polynomial reduction to~$(y/v)H$ yields ${ (y/v)H = \Delta_y(1\cdot H)}$.
        Combining the above steps, we decompose~$T$ as
        $
        T = \Delta_y\left(y/(y+1)H\right) - \left(1/(y+2)\right) H.
        $
        So the input term~$T$ is not summable.
    \end{example}
    \begin{example}\label{ex:summable1}
        Let~$T$ be the same hypergeometric term as in
        Example~\ref{EX:summable}. Then~$K = y+1$ and~$S = y$. Set~$H = y!$. By the shell reduction in Example~\ref{EX:summable},~$T = yH$. The polynomial reduction yields~$yH = \Delta_y\left(y!\right),$ hence~$T = \Delta_y\left(y!\right)$.
    \end{example}

\begin{remark}
With the notation given in the step~5 of the modified version, we can rewrite~$rH$
as~$\left(s_1/s_2 \right) G$, where~$s_1=av+bq$, $s_2=b$, and~$G=H/v$. It follows from the case
distinction at the beginning of this section that the degree of~$s_1$ is bounded by~$\lambda$ given
in~\cite[Theorem~8]{Abramov2001}. The polynomial~$s_2$ is equal to~$b$ in~\eqref{EQ:ap} whose
degree is minimal by~\cite[Theorem~3]{Abramov2001}. Moreover,~$\sigma_y(G)/G$ is shift-reduced,
because~$\sigma_y(H)/H$ is. These are exactly the same required properties of the output of the
original version~\cite{Abramov2001}.

\end{remark}

    \section{Sum of two residual forms}\label{SECT:sum}
    To compute telescopers for bivariate hypergeometric terms by the modified Abramov-Petkov\v{s}ek reduction,
    we are confronted with the difficulty that the sum of two residual forms is not necessarily
    a residual form. This is because the least common multiple of two shift-free polynomials is not necessarily shift-free.

    The goal of this section is to show that the sum of two residual forms is congruent
    to a residual form modulo~$\bV_K$.

    \begin{example}
        Let $K{=}1/y$,~$r{=}1/(2y+1)$ and~$s{=}1/(2y+3)$. Then both~$r$ and~$s$ are residual forms w.r.t.~$K$,
        but their sum is not, because the denominator~$(2y+1)(2y+3)$ is not shift-free.
        However, we can still find an equivalent residual form. For example, we have $r + s \equiv -1/(2(2y+1)) + 1/2y \mod\bV_K$.
        Note that the residual form is not unique. Another possible choice is $r+s\equiv 1/(3(2y+3)) + 1/3y \bmod\bV_K$.
    \end{example}

    Let~$f$ and~$g$ be two nonzero polynomials in~$\bF[y]$. We say that~$f$ and~$g$ are {\em shift-coprime} if~$\gcd\left(f, \sigma_y^\ell(g)\right) = 1$ for all nonzero integer~$\ell$. Assume that both~$f$ and~$g$ are shift-free. By polynomial factorization
    and dispersion computation, one can uniquely decompose
    \begin{equation} \label{EQ:shiftcoprime}
        g = \tilde{g} \sigma_y^{\ell_1}\left(p_1^{m_1} \right) \cdots \sigma_y^{\ell_k}\left(p_k^{m_k} \right),
    \end{equation}
    where~$\tilde{g}$ is shift-coprime with~$f$, $p_1, \ldots, p_k$ are distinct, monic and irreducible factors of~$f$, $\ell_1, \ldots, \ell_k$
    are nonzero integers, $m_1, \ldots, m_k$ are multiplicities of~$\sigma_y^{\ell_1}\left(p_1 \right)$, $\ldots$,
    $\sigma_y^{\ell_k}\left(p_k \right)$ in~$g$, respectively. We refer to~\eqref{EQ:shiftcoprime} as the \emph{shift-coprime decomposition} of~$g$ w.r.t.~$f$.
    \begin{remark} \label{RE:coprime}
        The factors $\tilde g, \sigma_y^{\ell_1}\left(p_1^{m_1} \right)$,  \ldots, $\sigma_y^{\ell_k}\left(p_k^{m_k}\right)$ in~\eqref{EQ:shiftcoprime} are pairwise coprime, since~$f$
        and~$g$ are shift-free.
    \end{remark}

    To construct a residual form congruent to the sum of two given residual ones, we need three technical lemmas.
    The first one corresponds to the kernel reduction in~\cite{BCCLX2013}.
    \begin{lemma} \label{LM:kernelreduction}
        With Convention~\ref{CON:kernel}, assume that~$p_1, p_2$ are in~$\bF[y]$ and~$m$ in~$\bN$.
        Then there exist~$q_1, q_2$ in~$\bW_K$ s.t.
        \[\frac{p_1}{\prod_{i=0}^m \sigma_y^i(v)}{\equiv} \frac{q_1}{v}~{\rm mod}~\bV_K \,\, \text{and} \,\,
        \frac{p_2}{\prod_{j=1}^m \sigma_y^{-j}(u)}{\equiv} \frac{q_2}{v}~{\rm mod}~\bV_K. \]
    \end{lemma}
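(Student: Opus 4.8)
The plan is to establish both congruences by induction on~$m$, each step removing one factor from the denominator and feeding the result back in. The base case $m=0$ is Lemma~\ref{LM:reducedform} for the first congruence (there $\prod_{i=0}^{0}\sigma_y^i(v)=v$) and Remark~\ref{RE:poly} for the second (there the range $1\le j\le 0$ is empty, so $\prod_{j=1}^{0}\sigma_y^{-j}(u)=1$ and the left-hand side is a polynomial). All coprimality facts I will use follow from $K=u/v$ being shift-reduced: $\gcd(u,\sigma_y^k(v))=1$ for every $k\in\bZ$ (for $k=0$ this is $\gcd(u,v)=1$ of Convention~\ref{CON:kernel}), and, applying $\sigma_y^{-j}$ to the case $k=j$, also $\gcd(\sigma_y^{-j}(u),v)=1$ for every~$j$.

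\emph{First congruence.} Set $D_m=\prod_{i=0}^m\sigma_y^i(v)$, so that $\sigma_y(D_{m-1})=\prod_{i=1}^m\sigma_y^i(v)=D_m/v$. For $m\ge1$ I would take $r=c/D_{m-1}$ with $c\in\bF[y]$ left free; this gives
\[
    K\sigma_y(r)-r=\frac{u\,\sigma_y(c)-c\,\sigma_y^{m}(v)}{D_m},
    \qquad\text{so}\qquad
    \frac{p_1}{D_m}-\bigl(K\sigma_y(r)-r\bigr)=\frac{p_1-u\,\sigma_y(c)+c\,\sigma_y^{m}(v)}{D_m}.
\]
Because $\gcd(u,\sigma_y^{m}(v))=1$, the congruence $u\,\sigma_y(c)\equiv p_1\pmod{\sigma_y^{m}(v)}$ has a polynomial solution~$c$; for that~$c$ the last numerator is divisible by $\sigma_y^{m}(v)$, and the fraction collapses to $(t+c)/D_{m-1}$ for a suitable $t\in\bF[y]$. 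Hence $p_1/D_m\equiv(t+c)/D_{m-1}\bmod\bV_K$, and the induction hypothesis at index $m-1$, applied to the polynomial $t+c$, supplies the required $q_1\in\bW_K$.

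\emph{Second congruence.} Set $E_m=\prod_{j=1}^m\sigma_y^{-j}(u)$, with $E_0=1$; then $\sigma_y(E_m)=\prod_{j=1}^m\sigma_y^{1-j}(u)=u\,E_{m-1}$ for $m\ge1$. Here I would take the simple choice $r=-p_2/E_m$; using $\sigma_y(E_m)=u\,E_{m-1}$ one obtains
\[
    K\sigma_y(r)-r=\frac{p_2}{E_m}-\frac{\sigma_y(p_2)}{v\,E_{m-1}},
    \qquad\text{so}\qquad
    \frac{p_2}{E_m}\equiv\frac{\sigma_y(p_2)}{v\,E_{m-1}}\bmod\bV_K.
\]
Since $\gcd(v,E_{m-1})=1$, a partial fraction decomposition gives $\sigma_y(p_2)/(v\,E_{m-1})=a/v+b/E_{m-1}$ with $a,b\in\bF[y]$. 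Now Lemma~\ref{LM:reducedform} replaces $a/v$ by $q''/v$ with $q''\in\bW_K$, the induction hypothesis at index $m-1$ replaces $b/E_{m-1}$ by $q'/v$ with $q'\in\bW_K$, and $q_2:=q'+q''\in\bW_K$ since $\bW_K$ is an $\bF$-subspace.

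\emph{Main obstacle.} The only non-mechanical part is finding the right telescoping elements of $\bV_K$. For the first congruence this is $r=c/D_{m-1}$, which reduces matters to the polynomial congruence $u\,\sigma_y(c)\equiv p_1\pmod{\sigma_y^{m}(v)}$ --- solvable exactly because $K$ is shift-reduced; for the second it is the identity $\sigma_y(E_m)=u\,E_{m-1}$, which lets the numerator $u$ of~$K$ swallow the outermost factor $\sigma_y^{-m}(u)$ of $E_m$, at the price of an extra~$v$ in the denominator that is then cleared by Lemma~\ref{LM:reducedform} and the induction. Everything else is checking the two displayed identities and keeping track of degrees and coprimalities.
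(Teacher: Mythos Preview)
Your proof is correct and follows the same route as the paper's: induction on~$m$, removing one denominator factor per step via a suitable element of~$\bV_K$, with the first congruence handled by the ansatz $r=c/D_{m-1}$ (the paper's~$s/w_{m-1}$) together with a B\'ezout identity coming from $\gcd(u,\sigma_y^{m}(v))=1$, and the second by the telescoping identity driven by $\sigma_y(E_m)=u\,E_{m-1}$. The only difference is cosmetic: the paper writes out just the $m=1$ case of the second congruence and then leaves the inductive step with ``the induction can be completed as in the proof for~$p_1/w_m$'', whereas you spell out the partial-fraction split $\sigma_y(p_2)/(vE_{m-1})=a/v+b/E_{m-1}$ explicitly before invoking Lemma~\ref{LM:reducedform} and the induction hypothesis.
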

    \begin{proof} To prove the first congruence, let~$w_m= \prod_{i=0}^m \sigma_y^i(v)$.

        We proceed by induction on~$m$. If~$m=0$, then the conclusion holds by Lemma~\ref{LM:reducedform}.
        Assume that the lemma holds for~$m-1$.
        Consider the equality
        \[ \frac{p_1}{w_m} =  K \sigma_y\left( \frac{s}{w_{m-1}}\right)
        - \frac{s}{w_{m-1}} + \frac{t}{w_{m-1}}, \]
        where~$s, t \in \bF[y]$ are to be determined. This equality holds if and only if~$\sigma_y(s) u+ (t-s) \sigma_y^{m}(v) = p_1$.
        Since~$u$ and~$\sigma_y^{m}(v)$ are coprime, such~$s$ and~$t$ can be computed by the extended Euclidean algorithm.
        Thus,~$p_1/w_m \equiv t/w_{m-1} \mod \bV_K. $
        Consequently,~$p_1/w_m$ has a required residual form by the induction hypothesis.

        To prove the second congruence, we use
        the identity
        $$\frac{p_2}{\sigma_y^{-1}(u)} =
        K \sigma_y\left( - \frac{p_2}{\sigma_y^{-1}(u)} \right) - \left(- \frac{p_2}{\sigma_y^{-1}(u)}\right)
        + \frac{\sigma_y\left( p_2 \right)}{v}, $$
        which implies that~$p_2/\sigma_y^{-1}(u) \equiv \sigma_y\left( p_2 \right)/v~{\rm mod}~\bV_K.$
        By Lemma~\ref{LM:reducedform}, there exists~$q_2 \in \bW_K$ s.t.~$q_2/v$ is a residual form of~$p_2/\sigma_y^{-1}(u)$
        w.r.t.~$K$.
        Assume that the congruence holds for~$m-1$.  The induction can be completed as in the proof for~$p_1/w_m$.
    \end{proof}
    The next lemma provides us with flexibility to rewrite a rational function modulo~$\bV_K$.
    \begin{lemma} \label{LM:cong}
        Let~$K \in \bF(y)$ be nonzero and shift-reduced.
        Then, for every~$f \in \bF(y)$ and every~$\ell \in \bZ^+$,
        $$ f \equiv \sigma_y^\ell(f) \prod_{i=0}^{\ell-1} \sigma_y^i(K) \equiv
        \sigma_y^{-\ell}(f) \prod_{i=1}^\ell \sigma_y^{-i}\left(\frac{1}{K}\right) \bmod \bV_K.$$
    \end{lemma}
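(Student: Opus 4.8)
The plan is to reduce both congruences to the single elementary observation that, for every $h\in\bF(y)$, one has $h-K\sigma_y(h)\in\bV_K$: indeed, $h-K\sigma_y(h)=K\sigma_y(-h)-(-h)$, which is of the form $K\sigma_y(r)-r$ with $r=-h$. Since $\bV_K$ is an $\bF$-linear subspace of $\bF(y)$, any finite sum of such elements again lies in $\bV_K$, so it suffices to exhibit each congruence as a telescoping sum of these basic elements. Note that shift-reducedness of $K$ plays no role in this argument; it is imposed only because $K$ will be a kernel in the applications.

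For the first congruence I would introduce, for $0\le j\le\ell$, the rational functions $h_j=\sigma_y^{j}(f)\prod_{i=0}^{j-1}\sigma_y^i(K)$, where the empty product (case $j=0$) is understood to be $1$; thus $h_0=f$ and $h_\ell=\sigma_y^{\ell}(f)\prod_{i=0}^{\ell-1}\sigma_y^i(K)$. A one-line manipulation of the shifted products gives the key identity $K\sigma_y(h_j)=h_{j+1}$ for $0\le j\le\ell-1$. Consequently
\[
f-\sigma_y^{\ell}(f)\prod_{i=0}^{\ell-1}\sigma_y^i(K)=h_0-h_\ell=\sum_{j=0}^{\ell-1}\bigl(h_j-K\sigma_y(h_j)\bigr)\in\bV_K,
\]
which is exactly the first congruence.

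The second congruence is handled symmetrically. Here I would set, for $0\le j\le\ell$, the rational functions $\tilde h_j=\sigma_y^{-j}(f)\prod_{i=1}^{j}\sigma_y^{-i}\!\left(1/K\right)$, so that $\tilde h_0=f$ and $\tilde h_\ell=\sigma_y^{-\ell}(f)\prod_{i=1}^{\ell}\sigma_y^{-i}\!\left(1/K\right)$. The analogous product manipulation now yields $K\sigma_y(\tilde h_{j+1})=\tilde h_j$, equivalently $\tilde h_{j+1}-\tilde h_j=\tilde h_{j+1}-K\sigma_y(\tilde h_{j+1})\in\bV_K$, and summing over $j$ gives $f-\sigma_y^{-\ell}(f)\prod_{i=1}^{\ell}\sigma_y^{-i}\!\left(1/K\right)=\tilde h_0-\tilde h_\ell\in\bV_K$. (Alternatively, one may first prove the case $\ell=1$, namely $f\equiv\sigma_y^{-1}(f/K)\bmod\bV_K$ by taking $h=\sigma_y^{-1}(f/K)$ so that $K\sigma_y(h)=f$, and then iterate.) There is no genuine obstacle here; the only care needed is in the bookkeeping of the shifted products and the empty-product conventions, and in invoking that $\bV_K$ is a subspace so that the telescoped differences may be summed.
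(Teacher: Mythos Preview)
Your proof is correct and follows essentially the same approach as the paper: both arguments rest on the single-step identity $h\equiv K\sigma_y(h)\bmod\bV_K$ (equivalently $h-K\sigma_y(h)\in\bV_K$) and iterate it $\ell$ times, the paper by formal induction on~$\ell$ and you by writing out the telescoping sum explicitly. Your remark that shift-reducedness of~$K$ is not actually used in the argument is also accurate.
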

    \begin{proof}
        Let us show the first congruence by induction on~$\ell$. For~$\ell=1$, the identity
        $f = K \sigma_y(-f) - (-f) + \sigma_y(f) K$
        implies that~$f$ is congruent to~$\sigma_y(f) K$ modulo~$\bV_K$.
        Assume that it holds for~$\ell-1$. Set~$w_\ell= \prod_{i=0}^{\ell-1} \sigma_y^i(K)$. Then
        $f$ is congruent to~$\sigma_y^{\ell-1}(f) w_{\ell-1}$ modulo~$\bV_K$ by the induction hypothesis. Moreover,
        $\sigma_y^{\ell-1}(f) w_{\ell-1}$ is congruent to~$\sigma_y^{\ell}(f) w_{\ell}$ by the induction base, in
        which~$f$ is replaced with~$\sigma_y^{\ell-1}(f) w_{\ell-1}$. Hence, $f$ is congruent to $\sigma_y^{\ell}(f) w_{\ell}$
        modulo~$\bV_K$.

        The second congruence can be shown similarly. For $\ell=1$,
        the identity $f = K \sigma_y(r) -r + r$ with $r=\sigma_y^{-1}(f)\sigma_y^{-1}(1/K)$
        implies that $f$ is congruent to~$r$ modulo~$\bV_K$.
        We can then proceed as in the proof of the first congruence.
    \end{proof}
    %
    %

    \begin{lemma} \label{LM:newresidual}
        With Convention~\ref{CON:kernel},
        let~$a, b \in \bF[y]$ with $b{\neq}0$. Assume that~$b$ is shift-free and strongly prime with~$K$.
        Assume further that~$\sigma_y^\ell(b)$ is strongly prime with~$K$ for some integer~$\ell$,
        then~$a/b$ has a residual form~$c/\sigma_y^\ell(b) + q/v$ w.r.t.~$K$,
        where~$c \in \bF[y]$ with~$\deg_y(c) < \deg_y(b)$ and~$q \in \bW_K$.
    \end{lemma}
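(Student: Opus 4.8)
The plan is to relocate the pole of $a/b$ from $b$ to $\sigma_y^\ell(b)$ by means of Lemma~\ref{LM:cong}, and then to absorb the extra factors produced in this process using partial fraction decomposition together with Remark~\ref{RE:poly} and Lemma~\ref{LM:kernelreduction}. I first record two elementary facts that will be used repeatedly: since $\sigma_y$ restricts to a ring automorphism of $\bF[y]$, the shift $\sigma_y^\ell(b)$ is again shift-free, and it is strongly prime with $K$ by hypothesis; moreover $\deg_y(\sigma_y^\ell(b))=\deg_y(b)$. Hence any rational function of the form $c/\sigma_y^\ell(b)+q/v$ with $c\in\bF[y]$, $\deg_y(c)<\deg_y(b)$ and $q\in\bW_K$ is automatically a residual form w.r.t.~$K$, so it suffices to produce such an expression congruent to $a/b$ modulo $\bV_K$. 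The case $\ell=0$ needs only Euclidean division: writing $a=Pb+c$ with $\deg_y(c)<\deg_y(b)$ and replacing the polynomial $P$ by some $q/v$ with $q\in\bW_K$ via Remark~\ref{RE:poly} already gives the claim.

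For $\ell>0$, I would apply the first congruence of Lemma~\ref{LM:cong} with $f=a/b$ and expand $K=u/v$ to obtain
\[
    \frac{a}{b}\ \equiv\ \frac{\sigma_y^\ell(a)\,\prod_{i=0}^{\ell-1}\sigma_y^i(u)}{\sigma_y^\ell(b)\,\prod_{i=0}^{\ell-1}\sigma_y^i(v)}\ \bmod\ \bV_K.
\]
Since $\sigma_y^\ell(b)$ is strongly prime with $K$, it is coprime with $\prod_{i=0}^{\ell-1}\sigma_y^i(v)$, so a partial fraction decomposition gives
\[
    \frac{\sigma_y^\ell(a)\,\prod_{i=0}^{\ell-1}\sigma_y^i(u)}{\sigma_y^\ell(b)\,\prod_{i=0}^{\ell-1}\sigma_y^i(v)}
    \ =\ P+\frac{c}{\sigma_y^\ell(b)}+\frac{N}{\prod_{i=0}^{\ell-1}\sigma_y^i(v)},
\]
with $P,c,N\in\bF[y]$ and $\deg_y(c)<\deg_y(\sigma_y^\ell(b))=\deg_y(b)$. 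Now I would absorb $P$ into a term $q_0/v$ with $q_0\in\bW_K$ by Remark~\ref{RE:poly}, and absorb the last summand into a term $q_1/v$ with $q_1\in\bW_K$ by the first congruence of Lemma~\ref{LM:kernelreduction} (applied with $m=\ell-1$). Setting $q=q_0+q_1\in\bW_K$ yields $a/b\equiv c/\sigma_y^\ell(b)+q/v\bmod\bV_K$, which is the required residual form.

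The case $\ell<0$ is entirely symmetric. Writing $\ell=-m$ with $m>0$ and using the second congruences of Lemmas~\ref{LM:cong} and~\ref{LM:kernelreduction}, I would obtain
\[
    \frac{a}{b}\ \equiv\ \frac{\sigma_y^{-m}(a)\,\prod_{i=1}^{m}\sigma_y^{-i}(v)}{\sigma_y^{-m}(b)\,\prod_{i=1}^{m}\sigma_y^{-i}(u)}\ \bmod\ \bV_K,
\]
where $\sigma_y^{-m}(b)=\sigma_y^\ell(b)$ is coprime with $\prod_{i=1}^{m}\sigma_y^{-i}(u)$ because it is strongly prime with $K$; a partial fraction decomposition, Remark~\ref{RE:poly}, and the second congruence of Lemma~\ref{LM:kernelreduction} then finish the argument exactly as in the positive case.

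I do not expect a genuinely hard step here. The two things that need attention are the coprimality conditions licensing the partial fraction steps --- these are precisely what the hypothesis that $\sigma_y^\ell(b)$ is strongly prime with $K$ provides --- and the degree bound $\deg_y(c)<\deg_y(b)$, which comes for free since partial fractions give $\deg_y(c)<\deg_y(\sigma_y^\ell(b))$ and $\sigma_y$ preserves degrees. The remaining work is purely organizational: keeping the positive and negative $\ell$ cases apart without swapping the roles of $u$ and $v$.
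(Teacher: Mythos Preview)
Your proposal is correct and follows essentially the same route as the paper's proof: the case $\ell=0$ via Euclidean division and Remark~\ref{RE:poly}, the case $\ell>0$ via the first congruence of Lemma~\ref{LM:cong}, partial fractions (using strong primality of $\sigma_y^\ell(b)$ with $K$ to get coprimality with $\prod_i\sigma_y^i(v)$), and then Lemma~\ref{LM:kernelreduction}; the case $\ell<0$ symmetrically with the second congruences. You are in fact slightly more explicit than the paper in isolating the polynomial part $P$ of the partial fraction decomposition and invoking Remark~\ref{RE:poly} for it, which makes the degree bound $\deg_y(c)<\deg_y(b)$ transparent.
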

    \begin{proof} First,
        consider the case in which~$\ell \ge 0$.
        If~$\ell=0$, then there exist~$c, p \in \bF$ with~$\deg_y(c)<\deg_y(b)$ s.t.~$a/b=c/b+p$.
        The lemma follows from Remark~\ref{RE:poly}.

        Assume that~$\ell > 0$. By the first congruence of Lemma~\ref{LM:cong},
        \[ \frac{a}{b} \equiv \sigma_y^\ell\left(\frac{a}{b}\right) \left( \prod_{i=0}^{\ell-1} \sigma_y^i(K) \right)
        = \frac{\sigma_y^\ell(a)}{\sigma_y^\ell(b)}
        \frac{\prod_{i=0}^{\ell-1} \sigma_y^i(u)}{\prod_{i=0}^{\ell-1} \sigma_y^i(v)} \mod \bV_K. \]
        Note that~$\sigma_y^\ell(b)$ is strongly prime with~$v$ by assumption. Then it is coprime
        with the product~$v \sigma_y(v) \cdots \sigma_y^{\ell-1}(v)$. By partial fraction decomposition, we get
        \[ \frac{a}{b} \equiv \frac{\tilde{a}}{\sigma_y^\ell(b)} + \frac{\tilde{q}}{\prod_{i=0}^{\ell-1} \sigma_y^i(v) } \mod \bV_K. \]
        By the first congruence of Lemma~\ref{LM:kernelreduction},
        the second summand in the right-hand side of the above congruence
        can be replaced by a residual form whose denominator is equal to~$v$. The first assertion holds.

        The case in which~$\ell<0$ can be handled in the same way, in which the second congruences of Lemmas~\ref{LM:cong}
        and~\ref{LM:kernelreduction} will be used instead of the first ones in these lemmas.
    \end{proof}

    We are ready to present the main result of this section.
    \begin{theorem}  \label{TH:add}
        With Convection~\ref{CON:kernel}, let~$r$ and~$s$ be two residual forms w.r.t.~$K$.
        Then there exists a residual form~$t$ congruent to~$s$ modulo~$\bV_K$ s.t.,
        for all~$\lambda, \mu \in \bF$,~$\lambda r+ \mu t$ is a residual form w.r.t.~$K$ congruent to~$\lambda r + \mu s$ modulo~$\bV_K$.
    \end{theorem}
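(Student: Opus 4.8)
The plan is to leave $r$ untouched and replace $s$ by a residual form $t$ that is congruent to it modulo $\bV_K$ but whose denominator is ``aligned'' with the denominator of $r$. Write $r=a_1/b_1+q_1/v$ and $s=a_2/b_2+q_2/v$ as in Definition~\ref{DEF:residual}, with $b_1,b_2$ shift-free and strongly prime with $K$, $\deg_y a_i<\deg_y b_i$, and $q_1,q_2\in\bW_K$. The only obstruction to $\lambda r+\mu s$ being a residual form is shift-freeness of the denominator: $\bW_K$ is an $\bF$-vector space, so the $\bW_K$-parts always add up inside $\bW_K$; any polynomial dividing $b_1b_3$ is again strongly prime with $K$ once $b_1$ and $b_3$ are; and if $\deg_y a_1<\deg_y b_1$ and $\deg_y a_3<\deg_y b_3$, then clearing $a_1/b_1+a_3/b_3$ over $\lcm(b_1,b_3)$ keeps the numerator degree below the denominator degree. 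Hence it suffices to produce $t\equiv s\bmod\bV_K$ of the form $a_3/b_3+q_3/v$ (a residual form) with $b_3$ \emph{shift-coprime} with $b_1$: then $\lcm(b_1,b_3)$ is shift-free, so $\lambda r+\mu t$ is a residual form for every $\lambda,\mu\in\bF$, and it is congruent to $\lambda r+\mu s$ simply because $t\equiv s$. Thus the single $t$ will work for all $\lambda,\mu$ at once.

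To construct $t$, I would run the shift-coprime decomposition~\eqref{EQ:shiftcoprime} of $b_2$ with respect to $b_1$, writing $b_2=\tilde b_2\,\sigma_y^{\ell_1}(p_1^{m_1})\cdots\sigma_y^{\ell_k}(p_k^{m_k})$ with $\tilde b_2$ shift-coprime with $b_1$, the $p_i$ distinct monic irreducible factors of $b_1$, and $\ell_i\neq0$. These factors are pairwise coprime by Remark~\ref{RE:coprime}, so a partial fraction decomposition gives $a_2/b_2=\tilde a/\tilde b_2+\sum_{i=1}^k a^{(i)}/\sigma_y^{\ell_i}(p_i^{m_i})$ with all numerators proper. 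Each block $\sigma_y^{\ell_i}(p_i^{m_i})$ is shift-free and strongly prime with $K$ (it divides $b_2$), and its unshifted representative $p_i^{m_i}$ is strongly prime with $K$ as well --- this is the clean point of the argument: $p_i$ divides $b_1$, and $b_1$ is strongly prime with $K$. Therefore Lemma~\ref{LM:newresidual}, applied with shift $-\ell_i$, replaces $a^{(i)}/\sigma_y^{\ell_i}(p_i^{m_i})$ by a residual form $c^{(i)}/p_i^{m_i}+q^{(i)}/v$ with $\deg_y c^{(i)}<\deg_y p_i^{m_i}$ and $q^{(i)}\in\bW_K$. Setting $b_3:=\tilde b_2\prod_{i=1}^k p_i^{m_i}$, recombining $\tilde a/\tilde b_2+\sum_i c^{(i)}/p_i^{m_i}$ over $b_3$ into a single proper fraction $a_3/b_3$, and putting $q_3:=q_2+\sum_i q^{(i)}\in\bW_K$, I obtain $t:=a_3/b_3+q_3/v$, which is congruent to $s$ modulo $\bV_K$ by construction.

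Finally I would verify that $t$ really is a residual form and that $b_3$ is shift-coprime with $b_1$. The bound $\deg_y a_3<\deg_y b_3$ is inherited from combining proper fractions with pairwise coprime denominators $\tilde b_2,p_1^{m_1},\dots,p_k^{m_k}$, and $b_3$ is strongly prime with $K$ because it is a product of polynomials each strongly prime with $K$ (namely $\tilde b_2\mid b_2$, and the $p_i^{m_i}$ since $p_i\mid b_1$). The remaining step, which I expect to be the main obstacle (it is pure bookkeeping, but easy to get wrong), is shift-freeness of $b_3$ and shift-coprimality of $b_3$ with $b_1$: here one uses that $b_2$ is itself shift-free, so within the shift-orbit of any $p_i$ it contains exactly the factor $\sigma_y^{\ell_i}(p_i^{m_i})$ with $\ell_i\neq0$ and hence \emph{not} $p_i$; consequently $\tilde b_2$ and the $p_i^{m_i}$ lie in pairwise distinct shift-orbits, which together with the shift-freeness of $\tilde b_2$ (inherited from $b_2$) and of each $p_i^{m_i}$ (inherited from $b_1$) makes $b_3$ shift-free. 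Likewise, $\tilde b_2$ is shift-coprime with $b_1$ by construction, and no nontrivial shift of any $p_i$ divides $b_1$ (as $b_1$ is shift-free), so $b_3$ is shift-coprime with $b_1$. Putting the three paragraphs together proves the theorem with $t$ as constructed, independently of $\lambda$ and $\mu$.
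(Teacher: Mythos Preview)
Your proof is correct and follows essentially the same route as the paper's: take the shift-coprime decomposition of the second denominator with respect to the first, split by partial fractions, use Lemma~\ref{LM:newresidual} to pull each offending piece $a^{(i)}/\sigma_y^{\ell_i}(p_i^{m_i})$ back to denominator $p_i^{m_i}$, and then observe that the resulting denominator $b_3=\tilde b_2\prod_i p_i^{m_i}$ is shift-free, strongly prime with $K$, and shift-coprime with $b_1$. Your write-up is in fact a bit more explicit than the paper's on the bookkeeping for shift-freeness of $b_3$ (the paper dispatches this in one line), but the structure and the key lemma invoked are identical.
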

    \begin{proof} Let~$r=a/f+p/v$ and~$s=b/g+q/v$, where~$a,f,b,g \in \bF[y]$, $\deg_y(a)<\deg(f)$, $\deg_y(b)<\deg_y(g)$,
        $p, q \in \bW_K$, and~$f, g$ are shift-free and strongly prime with~$K$.

        Assume that~\eqref{EQ:shiftcoprime} is the shift-coprime decomposition of~$g$ w.r.t.~$f$.
        Set~$P_i=\sigma_y^{\ell_i}(p_i)$ for~$i=1$, \ldots,~$k$.
        By Remark~\ref{RE:coprime} and partial fraction decomposition, we have
        \begin{equation} \label{EQ:adddecomp1}
            \frac{b}{g} = \frac{b_0}{\tilde g} + \sum_{i=1}^k \frac{b_i}{P_i^{m_i}},
        \end{equation}
        where~$b_0, b_1, \ldots, b_k \in \bF[y]$. Note that~$p_i=\sigma_y^{-\ell_i}(P_i)$, which
        is a factor of~$f$. Thus it is strongly prime with~$K$. So we can apply Lemma~\ref{LM:newresidual} to each
        fraction~$b_i/P_i^{m_i}$ in~\eqref{EQ:adddecomp1} to get
        \begin{equation} \label{EQ:adddecomp2}
            \frac{b}{g} \equiv  \frac{b_0}{\tilde g} + \sum_{i=1}^k \frac{b_i^\prime}{p_i^{m_i}} + \frac{q^\prime}{v} \mod \bV_K,
        \end{equation}
        where~$b_1^\prime, \ldots, b_k^\prime \in \bF[y]$ and~$q^\prime \in \bW_K$.

        Set~$h=\tilde g \prod_{i=1}^k p_i^{m_i}$. Then~$h$ is shift-free and strongly prime with~$K$ as both~$f$ and~$g$ are.
        Since~$f$ is shift-free, all its factors are shift-coprime with~$f$,
        so are the~$p_i$'s, and so is~$h$. Let~$t$ be the sum of~$q/v$ and the rational function in the right-hand
        side of~\eqref{EQ:adddecomp2}. Then there exist~$b^*$ in~$\bF[y]$ with~$\deg_y(b^*) {<} \deg_y(h)$ and~$q^*$ in~$\bW_K$
        s.t.~$t{=}b^*/h{+}q^*/v$. Since~$f$ and~$h$ are shift-coprime, their least common multiple is shift-free.
        Therefore,~$\lambda r + \mu t$ is a residual form w.r.t.~$K$, and $\lambda r + \mu t$ is congruent
        to~$\lambda r + \mu s \bmod \bV_K$.
    \end{proof}

    \section{Telescoping via reductions} \label{SECT:tele}
    Let~$\cC$ be a field of characteristic zero, and~$\cC(x, y)$
    be the field of rational functions in~$x$ and~$y$ over~$\cC$.
    Let~$\sigma_x, \sigma_y$ be the shift operators w.r.t.~$x$ and~$y$, respectively, defined by,
    \[\sigma_x(f(x, y)) = f(x+1, y) \,\,\, \text{and} \, \,\, \sigma_y(f(x, y)) = f(x, y+1), \]
    for any~$f\in \cC(x, y)$. Then the pair~$(\cC(x, y), \{\sigma_x, \sigma_y\})$
    forms a partial difference field.

    \begin{definition}\label{DEF:bihyper}
        Let~$\bD$ be a partial difference ring extension of~$\cC(x, y)$.
        A nonzero element~$T\in \bD$ is called a \emph{hypergeometric term}  over~$\cC(x, y)$ if there exist~$f, g\in \cC(x, y)$
        s.t.~$\sigma_x(T) = fT$ and~$\sigma_y(T) = g T$. We call~$f, g$ the \emph{shift quotients} of~$T$ w.r.t.~$x$
        and~$y$, respectively.
    \end{definition}

    An irreducible polynomial~$p\in \cC[x, y]$ is said to be~\emph{integer-linear}
    over~$\cC$ if there exist~$f\in \cC[z]$, $m, n\in \bZ$ with~$n\geq 0$ and~$\gcd(m, n)=1$,
    s.t.~$p = f(mx + ny)$. A polynomial in~$\cC[x, y]$ is said to be~\emph{integer-linear} over~$\cC$ if all of its irreducible
    factors are integer-linear. A rational function in~$\cC(x, y)$ is said to be~\emph{integer-linear} over~$\cC$ if its
    denominator and numerator are integer-linear.


    Let~$\bF$ be the field~$\cC(x)$, and~$\bF\langle S_x \rangle $ be the ring of linear recurrence operators
    in~$x$, in which the commutation rule is
    that~$S_x r = \sigma_x(r)S_x$ for all~$r\in \bF$.
    The application of
    an operator~$L = \sum_{i=0}^\rho \ell_iS_x^i$ to a hypergeometric term~$T$ is
    defined as~$L(T) =\sum_{i=0}^\rho \ell_i\sigma_x^i(T)$.

    \begin{definition}\label{DEF:tele}
        Let~$T$ be a hypergeometric term over~$\bF(y)$. A nonzero operator~$L \in \lsx$ is called a \emph{telescoper}
        for~$T$ if there exists a hypergeometric term~$G$ s.t.
        $L(T) = \Delta_y(G).$
        We call~$G$ the \emph{certificate} of~$L$.
    \end{definition}

    For hypergeometric terms, telescopers do not always exist. Abramov presented a criterion
    for determining the existence of telescopers in~\cite[Theorem 10]{Abramov2003}.
    Let~$K{=}u/v$ be a kernel of $\sigma_y(T)/T$ and $S$ the corresponding shell.
    Applying the modified Abramov-Petkov\v{s}ek reduction w.r.t.~$y$ to~$T$
    yields $T = \Delta_y(uH) + rH$, where $u\in \bF(y)$, $H = T/S$, and $r=a/b+q/v$ is the residual form of~$S$ w.r.t.~$K$.
    By Abramov's criterion, $T$ has a telescoper if and only if $b$ is integer-linear over~$\cC$.
    When telescopers exist, Zeilberger's algorithm~\cite{zeilberger90a} constructs a telescoper
    for~$T$ by iteratively using the Gosper algorithm to detect the summability of~$L(T)$ for
    an ansatz~$L=\sum_{i=0}^\rho \ell_i S_x^i\in \lsx$.

    Following the creative telescoping algorithms based on Hermite reductions~\cite{BCCL2010, ChenThesis, bostan13, BCCLX2013}
    in the continuous case, we use the modified Abramov-Petkov\v{s}ek reduction to
    develop a telescoping algorithm, which is outlined below.

    \medskip \noindent
    {\bf {ReductionCT}}: Given a hypergeometric term~$T$ with shift
    quotients~$f=\sigma_x(T)/T$ and~$g=\sigma_y(T)/T$ in~$\bF(y)$,
    compute a telescoper of minimal order for~$T$ and its certificate if telescopers exist.
    \begin{enumerate}
        \item Find a kernel~$K$ and shell~$S$ of~$T$ w.r.t.~$y$ s.t.~$T=SH$ with~$K=\sigma_y(H)/H$.

                  \kern-\medskipamount

        \item Apply the modified Abramov-Petkov\v{s}ek reduction to~$T$ to get
        \begin{equation} \label{EQ:adddecomp0}
            T = \Delta_y(u_0 H)+ r_0 H.
        \end{equation}
        If~$r_0=0$, then return~$(1, u_0H)$.

                  \kern-\medskipamount
        \item If the denominator of~$r_0$ is not integer-linear, return \lq\lq No telescoper exists!\rq\rq.

                  \kern-\medskipamount
        \item Set~$N:=\sigma_x(H)/H$ and~$R := \ell_0 r_0$, where~$\ell_0$ is an indeterminate.

        For~$i=1, 2, \ldots$, do
        \begin{enumerate}
            \item[4.1.] View $\sigma_x(r_{i-1})NH$ as a hypergeometric term with kernel $K$ and
            shell $\sigma_x(r_{i-1})N$. Using shell reduction w.r.t.~$K$ and polynomial reduction
            w.r.t.~$K$, find $u_i'\in\bF$ and a residual form $\tilde r_i$
            w.r.t.~$K$ s.t.~$\sigma_x(r_{i-1})NH=\Delta_y(u_i'H)+\tilde r_iH.$

                  \kern-\smallskipamount
            \item[4.2.] Set $\tilde u_i=\sigma_x(u_{i-1})N+u_i'$, so that
            \begin{equation} \label{EQ:adddecompp}
                \sigma_x^i(T)=\Delta_y(\tilde u_iH)+\tilde r_iH.
            \end{equation}

                  \kern-\smallskipamount
            \item[4.3.] Follow the proof of~Theorem~\ref{TH:add} to compute~$u_i$ and~$r_i$ in~$\bF(y)$
            s.t.~$r_i \equiv \tilde{r}_i \bmod \bV_K,$
            \begin{equation} \label{EQ:adddecomp}
                \sigma_x^i(T) = \Delta_y(u_iH) + r_iH,
            \end{equation}
            and that~$R + \ell_i r_i$ is a residual form w.r.t.~$K$, where~$\ell_i$ is an indeterminate.

                  \kern-\smallskipamount
            \item[4.4.] Update~$R$ to be~$R + \ell_i r_i$.

                  \kern-\smallskipamount
            \item[4.5.] Find~$\ell_j\in \bF$ s.t.~$R= 0$ by solving a linear system in~$\ell_0, \ldots, \ell_i$ over~$\bF$. If there is a nontrivial
            solution, return~$\left(\sum_{j=0}^i \ell_j S_x^j, \, \sum_{j=0}^i \ell_ju_jH \right)$.
        \end{enumerate}
    \end{enumerate}
    \begin{theorem}\label{THM:redct}
        Let~$T$ be a hypergeometric term over~$\bF(y)$. If~$T$ has a telescoper, then
        the algorithm {\bf {ReductionCT}} terminates and returns a telescoper of minimal order
        for~$T$.
    \end{theorem}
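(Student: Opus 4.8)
The plan is to establish two things: (i) correctness, namely that whenever the algorithm returns an operator it is indeed a telescoper with the stated certificate, and (ii) termination together with minimality of the order, under the hypothesis that a telescoper exists.

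Correctness is essentially bookkeeping. First I would check that the invariant \eqref{EQ:adddecomp} is maintained by the loop: at step~4.1 we write $\sigma_x(r_{i-1})NH = \Delta_y(u_i'H)+\tilde r_iH$ using shell and polynomial reduction, which is legitimate because $\sigma_x(r_{i-1})NH$ is a hypergeometric term similar to $H$ with kernel~$K$; combining this with the inductive identity $\sigma_x^{i-1}(T)=\Delta_y(\tilde u_{i-1}H)+\tilde r_{i-1}H$ (apply $\sigma_x$ and use $\sigma_x(H)=NH$) yields \eqref{EQ:adddecompp}. Step~4.3 replaces $\tilde r_i$ by a congruent residual form $r_i$ via Theorem~\ref{TH:add}, so $\tilde r_iH\equiv r_iH \bmod \bU_H$ by~\eqref{EQ:congsum}, giving~\eqref{EQ:adddecomp}; moreover Theorem~\ref{TH:add} guarantees that $R+\ell_ir_i$ remains a residual form in the indeterminates $\ell_0,\dots,\ell_i$. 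Hence at every stage $R = \sum_{j=0}^i \ell_j r_j$ is a residual form and, by~\eqref{EQ:adddecomp}, $\sum_{j=0}^i \ell_j\sigma_x^j(T) = \Delta_y\bigl(\sum_{j=0}^i \ell_j u_jH\bigr) + RH$. When step~4.5 finds $\ell_j\in\bF$, not all zero, with $R=0$, the right-hand side becomes $\Delta_y\bigl(\sum \ell_j u_j H\bigr)$, so $\sum_j \ell_j S_x^j$ is a telescoper with the returned certificate; this settles correctness. The case $r_0=0$ in step~2 is the order-zero instance of the same identity, and step~3's early exit is justified by Abramov's criterion.

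For termination and minimality, the key point is that $R=\sum_{j=0}^i\ell_jr_j$ being a residual form means the linear system in step~4.5 is exactly: find $(\ell_0,\dots,\ell_i)\in\bF^{i+1}$, not all zero, such that the residual form $\sum_j\ell_jr_j$ vanishes identically. By Proposition~\ref{PROP:residual} (applied to $r_i H$) combined with the $\bF$-linearity built into Theorem~\ref{TH:add}, $\sum_j\ell_j\sigma_x^j(T)$ is summable \emph{if and only if} $\sum_j\ell_jr_j=0$ as a residual form. Therefore the algorithm returns at order~$i$ precisely when the residual forms $r_0,\dots,r_i$ become $\bF$-linearly dependent, which is the minimal possible order of any telescoper. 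It remains to argue that this dependence must occur: if $T$ has a telescoper $L=\sum_{j=0}^\rho\ell_jS_x^j$ of order~$\rho$, then $\sum_{j=0}^\rho\ell_j\sigma_x^j(T)$ is summable, so $\sum_{j=0}^\rho\ell_jr_j=0$ — a nontrivial relation among $r_0,\dots,r_\rho$ — and the loop, which searches for such a relation at each successive order, cannot proceed beyond $i=\rho$ without finding one. Hence it terminates by order $\rho$ and returns a telescoper of minimal order.

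The main obstacle I expect is verifying that step~4.3 can actually be carried out coherently \emph{for all $i$ at once} while keeping $R$ a residual form: Theorem~\ref{TH:add} as stated adds \emph{two} residual forms, whereas the loop accumulates arbitrarily many, and each application may modify the shared denominator. The careful point is that iterating the construction in the proof of Theorem~\ref{TH:add} produces, after step~$i$, a common shift-free denominator whose irreducible factors are all shift-coprime to those of $r_0,\dots,r_{i-1}$'s denominators, so that the running sum stays a genuine residual form; one must check this does not loop forever or reintroduce shift-equivalent factors. A secondary subtlety is that the $r_j$ live over $\bF=\cC(x)$ (their coefficients are rational functions of $x$), so "not all zero'' and the linear algebra in step~4.5 are over $\bF$, not $\cC$; one should note that summability of $\sum\ell_j\sigma_x^j(T)$ over $\bF(y)$ is the relevant notion and that Proposition~\ref{PROP:residual} applies verbatim with $\bF=\cC(x)$. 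Once these are addressed, termination at the minimal order follows as sketched.
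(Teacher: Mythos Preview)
Your proposal is correct and mirrors the paper's proof closely: both establish the invariant~\eqref{EQ:adddecomp} inductively via Theorem~\ref{TH:add}, then invoke the equivalence between summability and vanishing of the residual form (Proposition~\ref{PROP:residual}, equivalently the second assertion of Theorem~\ref{THM:iapred}) to obtain termination at the minimal order. The one detail the paper spells out more carefully is the justification of step~3: Abramov's criterion, as quoted, concerns the denominator~$b_0$ of the proper-fraction part of~$r_0$, whereas step~3 tests the full denominator of~$r_0$; the paper bridges this by invoking the Ore--Sato theorem to conclude that~$K$, and hence~$v$, is automatically integer-linear, so the two conditions coincide.
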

    \begin{proof}
        By Theorem~\ref{THM:iapred}, $r_0=0$ implies that~$1$ is a telescoper for~$T$ of minimal order.

        Let~$r_0$ obtained from step~2 be of the form~$a_0/b_0+q_0/v$, where~$a_0, b_0, v \in \bF[y]$,~$\deg_y(a_0) < \deg_y(b_0)$, $b_0$ is strongly prime with~$K$, $q_0 \in \bW_K$, and~$v$ is the denominator of~$K$. By Ore-Sato's theorem~\cite{Ore1930, Sato1990} on hypergeometric
        terms,~$K$ is integer-linear and so is~$v$. It follows that~$b_0$ is integer-linear if and only if~$b_0v$ is.
        By Abramov's criterion,~$T$ has a telescoper if and only if the denominator of~$r_0$ is integer-linear.
        Thus, steps~$2$ and~$3$ are correct.

        It follows from~\eqref{EQ:adddecomp0} and $\sigma_x(r_0H)=\sigma_x(r_0)NH$ that~\eqref{EQ:adddecompp} holds for~$i=1$.
        By Theorem~\ref{TH:add}, there exists a residual form~$r_1$ w.r.t.~$K$  with~$r_1 \equiv \tilde{r}_1 \bmod \bV_K$
        s.t.~$R+\ell_1 r_1$ is again a residual form for all~$\ell_0, \ell_1 \in \bF$. Indeed, the proofs of the lemmas and Theorem~\ref{TH:add}
        enable us to obtain not only~$r_1$ but also a rational function~$g_1$ s.t.~$\tilde{r}_1 = K\sigma_y(g_1) -g_1 + r_1$. Setting~$u_1=\tilde{u}_1+g_1$,
        we see that~\eqref{EQ:adddecomp} holds for~$i=1$.
        By a direct induction on~$i$,~\eqref{EQ:adddecomp} holds in the loop of step~4.

        Assume that~$L = \sum_{i=0}^\rho c_i S_x^i$ is a telescoper of minimal order for~$T$ with~$c_i \in \bF$ and~$c_\rho \neq 0$.
        Then~$L(T)$ is summable. By Theorem~\ref{THM:iapred},~$\sum_{i=0}^\rho c_i r_i$ is equal to zero.
        Thus, the linear homogeneous system (over~$\bF$) obtained by equating~$\sum_{i=0}^\rho \ell_i r_i$ to zero
        has a nontrivial solution, which yields a telescoper of minimal order.
        %
        %
    \end{proof}
    \begin{remark}\label{RE:redCT}
          The algorithm {\bf ReductionCT} separates the computation of minimal
          telescopers from that of certificates. In applications where the
          certificates are irrelevant, we can drop step 4.2 and in step 4.3
          compute $u_i$ and $r_i$ with $r_i\equiv\tilde r_i\mod\bV_K$ and
          $\sigma_x^i(r_{i-1})NH=\Delta_y(u_iH)+r_iH$ and that $R+\ell_i r_i$ is a
          residual form w.r.t.~$K$, where $\ell_i$ is an indeterminate.
          The rational function $u_i$ can be discarded, and we do not need to calculate
          $\sum_{j=0}^i\ell_j u_jH$ in the end.
    \end{remark}
    \begin{remark}\label{RE:simple}
        Instead of applying the modified Abramov-Petkov\v sek reduction to $\sigma_x(r_{i-1})NH$ in step~4.1, it is also
        possible to apply the reduction to~$\sigma_x^i(T)$, but our experiments suggest
        that this variant takes considerably more time.
    \end{remark}
    \begin{example}
        Let~$T = \binom{x}{y}^3$. Set~$f$ and~$g$ to be~$\sigma_x(T)/T$ and~$\sigma_y(T)/T$, respectively.
Since~$g$ is shift-reduced w.r.t.~$y$, its kernel is equal to~$g$ itself, and the corresponding
shell is equal to~$1$. In step~4, we obtain $\sigma_y^{i}(T) \equiv (q_i/v) H \mod \bU_H$,
        where~$i{=}0,1,2$, $v=(y+1)^3$, $H$ has shift quotient~$g$ w.r.t.~$y$,
        \begin{align*}
            q_0 & =  \tfrac12(x{+}1)(x^2{-}x{+}3y(y{-}x{+}1){+}1), \quad q_1 = (x+1)^3, \quad and  \\
            q_2 &= \frac{(x+1)^3}{(x+2)^2} \left( 11x^2-12xy+17x+20+12y+12y^2 \right).
        \end{align*}
        By finding an~$\bF$-linear dependency among~$q_0, q_1, q_2$, we get
        \[L := (x+2)^2 S_x^2 -(7x^2+21x+16)S_x -8(x+1)^2\]
        is a telescoper of minimal order for~$T$.
    \end{example}

    \section{Implementation and timings} \label{SECT:timings}

    We have implemented our algorithms in Maple. In order to get an idea about their
    efficiency, we compared their runtime and memory requirements to the performance
    of known algorithms. All timings are measured in seconds on a Linux computer
        with 388Gb RAM and twelve 2.80GHz Dual core processors.

    For the first comparison, we considered univariate
    hypergeometric terms of the form
    \[
    T=\frac{f(y)}{g_1(y)g_2(y)}\frac{\Gamma(y-\alpha)}{\Gamma(y-\beta)},
    \]
    where $f\in\set Z[y]$ of
    degree~20, $g_i=p_i\sigma_y^\lambda(p_i)\sigma_y^\mu(p_i)$ with $p_i\in\set
    Z[y]$ of degree~10, $\lambda,\mu\in\set N$, and $\alpha,\beta\in\set Z$. For a
    selection of random terms of this type for different choices of $\mu$
    and~$\lambda$, Table~\ref{TAB:generalt} compares the timings of Maple's
    implementation of the classical Abramov-Petkov\v sek reduction~(AP) and our
    modified version~(MAP). We apply the algorithms to $T$ as
    well as to the summable terms $\sigma_y(T)-T$.

    \begin{table}[ht]
        \centering
        \begin{tabular}{ c | r r | r r}
            &\multicolumn{2}{|c|}{$T$}
            &\multicolumn{2}{|c}{$\sigma_y(T)-T$} \\
            $(\lambda, \mu)$&     AP &    MAP &     AP &  MAP   \\ \hline
            $(0, 0)$       &   0.45 &  0.30  &   4.41 &  2.29      \\
            $(5, 5)$       &   5.94 &  1.21  &  10.19 &  2.40      \\
            $(5, 10)$      &  14.69 &  2.20  &  15.67 &  3.30      \\
            $(10, 10)$     &  17.22 &  2.31  &  17.98 &  2.77      \\
            $(10, 20)$     &  57.05 &  6.20  &  38.03 &  3.80      \\
            $(10, 30)$     & 316.51 & 15.73  &  74.55 &  3.73      \\
            $(10, 40)$     & 514.84 & 32.64  & 134.29 &  3.99      \\\hline
        \end{tabular}
        \caption{\small Comparison of the Abramov-Petkov\v sek reduction and the modified version
            for a collection of non-summable terms $T$ and summable terms $\sigma_y(T)-T$.}
        \label{TAB:generalt}
    \end{table}

    For the second comparison, we considered bivariate hypergeometric terms
    of the form
    \[
    T=\frac{f(x,y)}{g_1(x+y)g_2(2x+y)}\frac{\Gamma(2\alpha x+y)}{\Gamma(x+\alpha y)}
    \]
    with $f\in\set Z[x,y]$ of degree~$n$,
    $g_i=p_i\sigma_z^\lambda(p_i)\sigma_z^\mu(p_i)$ with $p_i\in\set Z[z]$ of
    degree~$m$, and $\alpha,\lambda,\mu\in\set N$. For a selection of random terms
    of this type for different choices of $n,m,\alpha,\mu,\lambda$, Table~\ref{TAB:3}
    compares the timings of Maple's implementation of Zeilberger's algorithm~(Z) and
    two variants of the algorithm \textbf{ReductionCT} from Section~\ref{SECT:tele}:
    For the column RCT$_1$ we computed both the telescoper and certificate, and for
    RCT$_2$ we only compute the telescoper. The difference between these two variants
    comes mainly from the time needed to bring the rational function $u$ in the
    certificate $uH$ on a common denominator. When it is acceptable to keep the
    certificate as an unnormalized linear combination of rational functions, the
    timings are virtually the same as for RCT$_2$.

    \begin{table}[!ht]
        \vspace{-\smallskipamount}
        \begin{center}
            \def\c#1{\hbox to4em{\hss\smash{\raisebox{0.25ex}{#1}}\hss}}
            \begin{tabular}{l|r|r|r|c}
                $(m, n, \alpha,\lambda, \mu)$&         Z &  RCT$_1$ & RCT$_2$ & order \\ \hline
                $(1, 0, 1, 5, 5)$            &     17.12 &     5.00 &     1.80 & 4     \\
                $(1, 0, 2, 5, 5)$            &     74.91 &    26.18 &     5.87 & 6     \\
                $(1, 0, 3, 5, 5)$            &    445.41 &    92.74 &    17.34 &  7     \\
                $(1, 8, 3, 5, 5)$            &    649.57 &   120.88 &    23.59 &  7     \\
                $(2, 0, 1, 5, 10)$           &    354.46 &    58.01 &     4.93 & 4     \\
                $(2, 0, 2, 5, 10)$           &    576.31 &   363.25 &    53.15 & 6     \\
                $(2, 0, 3, 5, 10)$           &   2989.18 &  1076.50 &   197.75 &  7     \\
                $(2, 3, 3, 5, 10)$           &   3074.08 &  1119.26 &   223.41 &  7     \\
                $(2, 0, 1, 10, 15)$          &   2148.10 &   245.07 &    11.22 & 4     \\
                $(2, 0, 2, 10, 15)$          &   2036.96 &  1153.38 &   153.21 & 6     \\
                $(2, 0, 3, 10, 15)$          &  11240.90 &  3932.26 &   881.12 & 7     \\
                $(2, 5, 3, 10, 15)$          &  10163.30 &  3954.47 &   990.60 & 7     \\
                $(3, 0, 1, 5, 10)$           &  18946.80 &   407.06 &    43.01 & 6     \\
                $(3, 0, 2, 5, 10)$           &  46681.30 &  2040.21 &   465.88 & 8     \\
                $(3, 0, 3, 5, 10)$           & 172939.00 &  5970.10 &  1949.71 & 9     \\ \hline
            \end{tabular}
        \end{center}
        \caption{\small Comparison of Zeilberger's algorithm to reduction-based telescoping with and without
            construction of a certificate}
        \label{TAB:3}
    \end{table}


\end{document}